\def\NAT@def@citea{\def\@citea{\NAT@separator}}
\theoremstyle{plain}
\newtheorem{theorem}{Theorem}[section]
\newtheorem{lemma}[theorem]{Lemma}
\newtheorem{corollary}[theorem]{Corollary}
\newtheorem{proposition}[theorem]{Proposition}
\newtheorem{assumption}{Assumption}
\theoremstyle{definition}
\theoremstyle{remark}
\newtheorem{remark}{Remark}
\newtheorem{problem}{Problem}
\begin{document}

\articletype{RESEARCH ARTICLE}

\title{Sparse Feedback Controller: From Open-loop Solution to Closed-loop Realization}

\author{
  \name{Zhicheng Zhang$^{\dagger}$ and Yasumasa Fujisaki$^{\dagger}$}
  \thanks{Emails: \{zhicheng-zhang;fujisaki\}@ist.osaka-u.ac.jp}
  \affil{$^{\dagger}$Graduate School of Information Science and Technology, 
    Osaka University, Japan}
}

\maketitle

\begin{abstract}
In this paper, we explore the discrete time sparse feedback control 
for a linear invariant system, 
where the proposed optimal feedback controller enjoys input sparsity 
by using a dynamic linear compensator, i.e., 
the components of feedback control signal having the smallest possible nonzero values.
The resulting augmented dynamics ensures closed-loop stability, 
which infers sparse feedback controller from open-loop solution to closed-loop realization. 
In particular, 
we show that the implemented sparse feedback (closed-loop) control solution is 
equivalent to the original sparse (open-loop) control solution under a specified basis. 
We then extend the dynamic compensator to a feedforward tracking control problem.
Finally, numerical examples demonstrate the effectiveness of proposed control approach.
\end{abstract}

\begin{keywords}
optimal control; input sparsity; dynamic linear compensator; closed-loop solution; tracking control
\end{keywords}

\section{Introduction}

Sparse control is closely related to sparse optimization \cite{donoho2006compressed}, 
which penalizes sparsity on controller so as to schedule resource-aware allocation.
In control design, 
the sparsity-promoting idea has thrived in various directions to distributed control 
\cite{jovanovic2016controller,matni2016regularization,furieri2020sparsity}, 
tracking control \cite{zhang2021linear}, 
and predictive control \cite{nagahara2016discrete,darup2021fast,aguilera2017quadratic}.
When the sparsity is imposed on the control structure (i.e., structured sparsity) 
whose controller depends on a \emph{static} state 
feedback \cite{lin2013design,polyak2013lmi,babazadeh2016sparsity}, 
then the sparse control is recast into a distributed control, 
which attempts to \emph{reduce the number of communication links} 
\cite{jovanovic2016controller} in networked control system. 
Another appealing alternative is to penalize sparsity on control signal 
(i.e., input sparsity) to implement a sparse control 
that \emph{maximizes the time duration over which the control value is exactly zero} 
\cite{nagahara2015maximum}. 
In this paper, we are interested in the generation of control signals, 
which emphasizes on the latter sparse control, namely, 
reducing control effort or fuel consumption, also known as ``{$\ell_{1}$ optimal control}'' 
or ``{maximum hands-off control}'' \citep{nagahara2016discrete,bakolas2019computation}. 

The optimal control design of sparse signals is of great significance 
for control system and directly affects the dynamic process of the system. 
Many practical systems of interest are dependent on a feedback mechanism 
to achieve closed-loop stability. 
However, closed-loop realization is more challenging to \emph{optimal control problem} 
because \emph{determining the feedback gain (matrix)} is a non-trivial task 
\cite{blanchini2003relatively}.
Indeed, in closed-loop sparse control scenarios, 
almost all existing literature has been focused on discussing ``structured sparsity'' 
\cite{jovanovic2016controller,matni2016regularization,furieri2020sparsity} 
by optimizing linear quadratic state feedback cost 
\citep{lin2013design,polyak2013lmi,babazadeh2016sparsity} 
rather than pursuing our expected sparse control inputs 
(i.e., $\ell_{1}$ optimal control) \cite{nagahara2016discrete,caponigro2017sparse}. 
Furthermore, most successful stories on sparse control taking $\ell_{1}$ cost of 
discrete (resp., $\mathcal{L}_{1}$ cost of continuous) systems have been extensively treated in \emph{open-loop solutions} 
\cite{nagahara2015maximum,nagahara2016discrete,bakolas2019computation,rao2017sparsity}.
These recent advances motivate us to study 
the \emph{closed-loop $\ell_{1}$ optimal control} problem.

Although ``{real-time control}'' bridges the gap 
between the open-loop and closed-loop solutions, 
schemes such as self-triggered sparse control \cite[Sec.~VI]{nagahara2015maximum} and 
sparse predictive control 
\cite{nagahara2016discrete,iwata2020realization,aguilera2017quadratic,darup2021fast} can, 
and often do, emerge feedback solutions.
On the other hand, these iterative feedback algorithms perform \emph{online optimization}, 
leading to the computationally burden, 
especially when the decision variable is high dimension.
Neither exploring sparsity on the structure of feedback gain matrix or 
exploiting real-time control, we immediately promotes sparsity on the control inputs 
with a closed-loop response, called \emph{sparse feedback control realization}. 
Inspired by seminal works \cite{blanchini2003relatively,blanchini2015youla}, 
a relatively optimal control technique paves the way towards the open-loop solution to 
the closed-loop solution by means of linear implementation.

In this paper, we focus on \emph{closed-loop realization} 
for sparse optimal control design for a discrete linear time invariant system, 
where the state feedback controller is linear dynamic, as well as enjoys input sparsity. 
We become aware of the result \cite{bykov2018sparse} that induces a desired sparse input 
by taking a \emph{row-sparsity} on the \emph{static} state feedback gain, 
which implies a structured sparsity on channels. 
In contrast, we here leverage a \emph{dynamic state feedback} controller and 
keep the standard optimal control framework. 
We believe that such a direct sparse optimization of control signal is 
more convenient to synthesizing sparse feedback control. 
One of its benefits is that it relies solely on \emph{offline optimization}. 
Thus, the proposed controller can often offer a significant computational 
saving and control effort minimization. 
Encouraged by simple theoretical and numerical results 
by a position paper \cite{zhang2023iscssparse} of the present authors, 
we extend the results to a more comprehensive version, 
including problem setup, proofs, tracking control, and simulations.
The main contributions of this article are summarized as follows.
\begin{itemize}
  \item This paper gives feedback realization of sparse optimal control 
    via a dynamic linear compensator.
    In other words, the sparse feedback controller is derived from open-loop solutions 
    of sparse optimal control.
    Besides, we propose the computationally tractable analytical and 
    explicit feedback solution for the sparse control problem 
    (Problems \ref{prob:sparse-matrix-optim} and \ref{prob:sparse-feedback-realization}).
  \item We provide the stability, optimaility, and sparsity for the closed-loop 
    augmented system, and display that the designed sparse feedback control is 
    essentially a deadbeat control (Theorem~\ref{them:sparse-feedback-realziation} 
    and Corollary~\ref{coro:deadbeat}).
  \item In particular, we show that an equivalence for the open-loop sparse control and 
    the closed-loop sparse control under a specified basis (Corollary~\ref{coro:equiva}).
  \item Furthermore, we demonstrate that the sparse feedback control can in fact be 
    extended successfully for tracking problem (Lemma \ref{lemma:tracking-steady-state} 
    and Proposition \ref{prop:feedforward-tracking}).
\end{itemize}

This paper is organized as follows. 
In Section~\ref{sec:problem-formula}, 
we introduce the problem formulation and the basic preliminaries. 
Section~\ref{sec:sfb} gives the main result for stabilizing sparse feedback control 
using a dynamic linear compensator, which can be divided into two steps, 
that is, sparse optimization and feedback realization, respectively. 
Section~\ref{sec:sfb-tracking} extends the result to a tracking problem 
by devising a dynamic tracking controller. 
The numerical examples are illustrated in Section~\ref{sec:simulations}. 
Section~\ref{sec:conclusion} concludes this paper.

{\bf{Notation}.} 
Throughout this paper, let $\mathbb{R}$, $\mathbb{R}^{n}$, 
and $\mathbb{R}^{n\times{m}}$ denote the sets of real numbers, $n$ dimension of real vectors, 
and $n\times{m}$ size real matrices, respectively.
We use $I_{n}$ (or $0_{n}$) to denote the identity (or zero) matrix of size $n\times{n}$, 
$0_{n\times{m}}$ to denote the zero matrix of size $n\times{m}$; 
and for brevity, we sometimes abbreviate $I$ (or $0$) to represent the identify 
(or zero) matrix with appropriate size. 
Let ${\mathbf{1}}_{n}=[1~\cdots~1]^{\top}\in\mathbb{R}^{n}$ be the all one vector, 
and ${\rm{e}}_{i}\in\mathbb{R}^{q}$ stands for an $q$-tuple basis vector 
whose all entries equal to 0, except the $i$th, which is $1$.
Given a vector $x=[x_{1}~x_{2}~\cdots~x_{n}]^{\top}\in\mathbb{R}^{n}$, 
we define the $\ell_{1}$ and $\ell_{2}$ norms, respectively, 
by $\|x\|_{1}=\sum_{i=1}^{n}|x_{i}|$, $\|x\|_{2}=\sqrt{\sum_{i=1}^{n}|x_{i}|^{2}}$.
Similarity, the $\ell_{1}$ norm of a matrix $X\in\mathbb{R}^{n\times{m}}$ is defined 
by $\|X\|_{1}=\sum_{i=1}^{n}\sum_{j=1}^{m}|x_{ij}|$.

\section{Problem formulation}\label{sec:problem-formula}

\subsection{Review of sparse optimal control}\label{subsec:sys-descr}

Consider a discrete linear time invariant (LTI) system described by
\begin{equation}
\begin{aligned}
  x(t+1) &= Ax(t) + Bu(t), & x(0) &= x_{0}, \\
  y(t) &= Cx(t) + Du(t),
  \label{eq:DT-LTI}
\end{aligned}
\end{equation}
where $u(t)\in{{\mathbb{R}}}^{m}$ is the control input with $m\leq{n}$, 
$x(t)\in{{\mathbb{R}}}^{n}$ is the state with an initial value $x_{0}$, 
$y(t)\in{{\mathbb{R}}}^{p}$ is the output,
and $A$, $B$, $C$, and $D$ are real constant matrices of appropriate sizes.
Throughout this paper, we assume that the pair $(A,B)$ is \emph{reachable}.

In this paper, we are interested in sparse optimal control problem, 
and the control objective is to seek a control sequence 
$\{u(t)\}_{t=0}^{N-1}$ 
such that it drives the resultant state $x(t)$ from an initial state $x(0)=x_{0}$ 
to the origin in a finite $N$ steps (i.e., $x(N)=0$) with minimum or sparse control effort. 

As indicated in \cite{donoho2006compressed},
an exact sparsity is achieved by 
penalizing an $\ell_{0}$ ``quasi-norm'' on decision variables. 
However, computing the $\ell_{0}$ norm precisely is challenging 
due to its non-convex and non-smooth nature, often resulting in an NP-hard problem. 
It suggests replacing the $\ell_{0}$ cost with convex relaxation 
using an $\ell_{1}$ norm, which still generates the sparse solution. 
In particular, the restricted isometry property reveals an equivalence 
between $\ell_{0}$ (sparse) optimal control and $\ell_{1}$ optimal control for 
discrete systems \cite{nagahara2016discrete}.
In this context, we shift the idea from compressed sensing to sparse control problem, 
which seeks an ``{open-loop}'' $\ell_{1}$ optimal control action $u^{\ast}$ 
in control system \cite{nagahara2016discrete,bakolas2019computation} defined as 
\begin{align}
  u^{\ast} &= \arg\min_{u\in{\mathcal{U}}}~ \|u\|_{1}
  = \arg\min_{u\in{\mathcal{U}}}~\sum_{t=0}^{N-1} \|u(t)\|_{1},
  \label{eq:open-loop-l1-optimal}
\end{align}
where $u=\big[u^{\top}(0)~u^{\top}(1)~\cdots~u^{\top}(N-1)\big]^{\top}\in{{\mathbb{R}}}^{mN}$, 
and $\|u\|_{1}$ indicates the $\ell_{1}$ norm of input vector $u$ 
that sums the absolute values of its elements.
Meanwhile, a feasible control set is described by 
$\mathcal{U} = \{u\in{{\mathbb{R}}}^{mN}:\Phi_{N}{u}=-A^{N}x_{0}\}$,
in which $\Phi_{N}=[A^{N-1}B~\cdots~AB~B]\in{{\mathbb{R}}}^{n\times{mN}}$ is an $N$ step 
reachability matrix and satisfies full row rank, i.e., $\mathrm{rank}(\Phi_{N})=n$. 
Occasionally, the state and input constraints are necessarily taken into account, 
for instance, $y(t) \in \mathcal{Y}$, 
where ${\mathcal{Y}}$ is a convex and closed set. 
Besides, the horizon $N$ should be sufficiently long so that 
the admissible set of $u$ is \emph{nonempty}.

\subsection{Dynamic linear compensator}
Before proceeding with the ``feedback realization'', 
a compensator ${\mathcal{K}}$ which we demand to design is a \emph{dynamic} 
and \emph{linear} state feedback controller, depending on the evolution
\begin{align}
  \mathcal{K}: \quad
  \begin{aligned}
    z(t+1) &= Fz(t) + Gx(t), & z(0) &= 0, \\
    u(t) &= Hz(t) + Kx(t),
    \label{eq:dynamic-state-feedback}
  \end{aligned}
\end{align}
where $z(t)$ is the state of the compensator 
and $F$, $G$, $H$, and $K$ are real matrices of compatible sizes.
Note that we set the initial state $z(0)$ of the compensator as \emph{zero} (i.e., $z(0)=0$). 

The advantages of such a dynamic compensator ${\mathcal{K}}$ 
of \eqref{eq:dynamic-state-feedback} are threefold. 
First, it brings a \emph{linear dynamic} fashion to the sparse feedback control realization, 
which allows for computationally tractable compensator gain matrices. 
Second, it promotes \emph{input/temporal} sparsity 
\cite{nagahara2015maximum,nagahara2016discrete,bakolas2019computation}, 
rather than structured/spatial sparsity for the controller 
\cite{lin2013design,polyak2013lmi,matni2016regularization,jovanovic2016controller,babazadeh2016sparsity,furieri2020sparsity}. 
Lastly, it ensures \emph{internal stability} for the closed-loop system.

Notice that
the requirement $z(0) = 0$ is not overly restrictive in our context. 
In fact, in the following section, we further impose $z(N) = 0$, along with $x(N) = 0$, as part of the sparse control implementation. 
This means that we consider the closed-loop system through \emph{deadbeat control}. 
In this case, once the stable closed-loop reaches the zero state within a finite time, 
the condition $z(0) = 0$ is automatically fulfilled 
whenever we have another $x(0) \neq 0$ due to a new disturbance.

\section{Sparse feedback control realization}\label{sec:sfb}

In this section, we focus on optimal sparse feedback control 
synthesis \emph{from open-loop solution to closed-loop realization}. 
Determining the explicit matrices of $F$, $G$, $H$, $K$ 
for dynamic compensator \eqref{eq:dynamic-state-feedback} is of 
primary interest in designing sparse feedback controller.

Let us formally state the \emph{constrained sparse optimal control problem} with \emph{a general initial condition} $x_{0} \in \mathcal{X}_{0}$, where $\mathcal{X}_{0}\doteq\{{\rm{e}}_{1},{\rm{e}}_{2},\ldots,{\rm{e}}_{n}\}$ is used to \emph{generate all $n$ possible input-state trajectories}, and ${\rm{e}}_{i} \in \mathbb{R}^{n}$ represents the standard basis vector, e.g., ${\rm{e}}_{1} = [1~0~\cdots~0]^\top \in {\mathbb{R}}^n$.
The problem is as follows:
Find $F$, $G$, $H$, and $K$ of \eqref{eq:dynamic-state-feedback} such that 
\begin{enumerate}
  \item[(i)] the dynamic compensator \eqref{eq:dynamic-state-feedback} 
    stabilizes the plant \eqref{eq:DT-LTI} and 
  \item[(ii)] for any $x_{0} \in \mathcal{X}_{0}$
    with $z(0) = 0$, the controller \eqref{eq:dynamic-state-feedback} 
    generates an input sequence $\{u(t)\}_{t=0}^{N-1}$, which minimizes $\sum_{t=0}^{N-1}\|u(t)\|_1$ 
    subject to the terminal constraints $x(N) = 0$ and $z(N) = 0$ for a positive integer $N$, 
    as well as the state and input constraints
    \begin{align}
      y(t) &\in \mathcal{Y}, &
      \mathcal{Y} &= \{ y \in \mathbb{R}^{p} : -s \leq y(t) \leq s \}, 
      \label{constrs:state-input}
    \end{align}
    where $s\in\mathbb{R}^{p}$ is a given positive vector. 
\end{enumerate}

\begin{remark}
 The input sparsity can be easily performed by minimizing the convex $\ell_1$ norm 
instead of the non-convex $\ell_0$ norm, as seen in the previous section. 
Moreover, although we select only a subset of the initial states of the plant, i.e., 
$x_{0} \in\mathcal{X}_{0}$, 
which is sufficient for our purposes.
In fact, suppose that the given constrained sparse control problem is solved. 
Since the resultant closed-loop system 
composed of \eqref{eq:DT-LTI} and \eqref{eq:dynamic-state-feedback} is \emph{linear}, 
it means that for any $x_0 \in \mathbb{R}^n$ with $z(0) = 0$, 
the controller \eqref{eq:dynamic-state-feedback} generates a \emph{linear combination} 
of the input sequences corresponding to $x_{0} = {\rm{e}}_{1}$, $x_{0} = {\rm{e}}_{2}$, $\ldots$, 
$x_{0} = {\rm{e}}_{n}$, thereby achieving sparsity and satisfying $x(N) = 0$ and $z(N) = 0$.
\end{remark}

\subsection{Closed-loop augmented system}

In the celebrated works \cite{blanchini2003relatively,blanchini2015youla}, 
the authors performed the linear implementation built from the relatively optimal technique. 
This oracle suggests us to investigate an augmented closed-loop system 
composed of the discrete dynamics \eqref{eq:DT-LTI} and 
the dynamic compensator \eqref{eq:dynamic-state-feedback} of the form 
\begin{align}
\begin{aligned}
  \psi(t+1) &= (\mathcal{A} + \mathcal{B}\mathcal{K}) \psi(t), &
  \psi(0) &= \psi_0, \\
  y(t) &= (\mathcal{C} + \mathcal{D}\mathcal{K}) \psi(t),
  \label{eq:aug-sys-xz}
  \end{aligned}
\end{align}
where the corresponding state and the gain matrices are given by
\begin{align*}
  \psi(t) &= \begin{bmatrix} x(t) \\ z(t) \end{bmatrix}, &
  \psi_0 &= \begin{bmatrix} x_0 \\ 0 \end{bmatrix} \\
  {\mathcal{A}} &= \begin{bmatrix} A & 0 \\ 0 & 0 \end{bmatrix}, &
  {\mathcal{B}} &= \begin{bmatrix} B & 0 \\ 0 & I \end{bmatrix}, &
  {\mathcal{K}} &= \begin{bmatrix} K & H \\ G & F \end{bmatrix}, \\
  {\mathcal{C}} &= \begin{bmatrix} C & 0 \end{bmatrix}, &
  {\mathcal{D}} &= \begin{bmatrix} D & 0 \end{bmatrix}.
\end{align*}

To represent the closed-loop system dynamics in a compact way, 
we accordingly introduce a stable matrix $P$, 
which is an $N$-Jordan block associated with $0$ eigenvalue, defined by
\begin{align}
  {P} &= \begin{bmatrix} 0 & 0 \\ I_{N-1} & 0 \end{bmatrix} \in {\mathbb{R}}^{N\times{N}}.
  \label{eq:stable-P}
\end{align}

Based on the previous problem setup, 
we formulate the constrained sparse optimal control problem 
as the following sparse optimization.

\begin{problem}[Sparse Optimization]\label{prob:sparse-matrix-optim} 
Find the matrices $X \in {\mathbb{R}}^{n \times nN}$ and $U \in {\mathbb{R}}^{m \times nN}$ 
such that the obtained $U$ is sparse, 
which amounts to solve a constrained $\ell_{1}$ norm input matrix optimization
\begin{align*}
  \min_{X, U} \quad & \| U \|_{1} \\
  \textrm{s.t.} \quad 
  & AX + BU = X (P \otimes I_{n}), \\
  & I_{n} = X ({\rm{e}}_1 \otimes I_{n}), \\
  & {\mathrm{abs}}(CX+DU) \leq s({\bf{1}}_{n}\otimes{\bf{1}}_{N})^{\top}, 
\end{align*}
where ${\rm{e}}_1=[1~0~\cdots~0]^{\top}\in\mathbb{R}^{N}$ and $\mathrm{abs}(\cdot)$ returns the absolute value of each element in a matrix.
\end{problem}

It is mentioned that Problem \ref{prob:sparse-matrix-optim} is a convex optimization, 
and hence the solution is computationally tractable by means of the off-the-shelf packages, 
such as {\texttt{CVX}} \cite{grant2008cvx} 
or {\texttt{YALMIP}} \cite{lofberg2004yalmip} in {\text{MATLAB}}.

Once the open-loop optimal solution $(X, U)$ of Problem \ref{prob:sparse-matrix-optim} 
is attained, we then proceed the second step, 
that is to say, we move on to tackling the following sparse feedback realization problem. 

\begin{problem}[Feedback Realization]\label{prob:sparse-feedback-realization}
Based on the solution $(X,U)$ of Problem~\ref{prob:sparse-matrix-optim},
solve a linear equation 
\begin{align}
  \begin{bmatrix} K & H \\ G & F \end{bmatrix} \begin{bmatrix} X \\ Z \end{bmatrix}
  = \begin{bmatrix} U \\ V \end{bmatrix} 
  \label{eq:feedb-K}
\end{align}
with respect to $(K,H,G,F)$ and determine the compensator's gain matrices, where 
\begin{align}
  Z &= \begin{bmatrix} 0_{n(N-1)\times{n}} & I_{n(N-1)} \end{bmatrix}, &
  V &= Z (P \otimes I_{n}).
  \label{eq:matrices-Z-V}
\end{align}
\end{problem}

\subsection{Stability analysis}

The approach to sparse feedback control design makes use of 
the above discussed sparse optimization (Problem \ref{prob:sparse-matrix-optim}) and 
feedback realization (Problem \ref{prob:sparse-feedback-realization}), 
where the dynamic compensator ensures the internally stability of 
the closed-loop augmented system \eqref{eq:aug-sys-xz}. 
The result is summarized in the following theorem and corollary.

\begin{theorem}[Sparse Feedback Control Realization]\label{them:sparse-feedback-realziation}
Suppose that Problem \ref{prob:sparse-matrix-optim} has the minimizer $(X, U)$. 
Then the equation \eqref{eq:feedb-K} has the unique solution $(K,H,G,F)$ and 
the resulting compensator \eqref{eq:dynamic-state-feedback} with $z(0)=0$ generates 
the input sequence $u(t) = U ({\rm{e}}_{t+1} \otimes x_0)$, $t = 0, 1, \ldots, N-1$, 
for ${x}_{0}\in\mathcal{X}_{0}$,
which drives the plant state $x(t)$ from $x(0) = x_0$ to $x(N) = 0$ 
under the output constraint \eqref{constrs:state-input}.
Furthermore, the closed-loop system \eqref{eq:aug-sys-xz} is internally stable.
\end{theorem}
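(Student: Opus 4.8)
The plan is to exploit the block structure created by Problems~\ref{prob:sparse-matrix-optim} and \ref{prob:sparse-feedback-realization} and to show that the closed-loop map $\mathcal{A}+\mathcal{B}\mathcal{K}$ is \emph{similar} to the nilpotent shift $P\otimes I_n$; every assertion of the theorem then drops out of this single similarity. First I would settle uniqueness of $(K,H,G,F)$. Partition $X=[X^{(1)}~\cdots~X^{(N)}]$ into $n\times n$ blocks. The constraint $I_n=X({\rm e}_1\otimes I_n)$ forces $X^{(1)}=I_n$, while the definition of $Z$ in \eqref{eq:matrices-Z-V} makes the stacked matrix $\Xi\doteq\begin{bmatrix} X \\ Z \end{bmatrix}$ block upper triangular with identity diagonal blocks,
\begin{align*}
\Xi = \begin{bmatrix} I_n & X^{(2)} & \cdots & X^{(N)} \\ 0 & I_n & & \\ & & \ddots & \\ 0 & & & I_n \end{bmatrix}.
\end{align*}
Hence $\Xi$ is invertible for \emph{any} feasible $X$, and \eqref{eq:feedb-K} has the unique solution $\mathcal{K}=\begin{bmatrix} U \\ V \end{bmatrix}\Xi^{-1}$.

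The key step is an intertwining identity. Using \eqref{eq:feedb-K} together with the dynamics constraint $AX+BU=X(P\otimes I_n)$ of Problem~\ref{prob:sparse-matrix-optim} and the definition $V=Z(P\otimes I_n)$, I would compute
\begin{align*}
(\mathcal{A}+\mathcal{B}\mathcal{K})\Xi
&= \mathcal{A}\Xi+\mathcal{B}\begin{bmatrix} U \\ V \end{bmatrix}
= \begin{bmatrix} AX+BU \\ V \end{bmatrix} \\
&= \begin{bmatrix} X(P\otimes I_n) \\ Z(P\otimes I_n) \end{bmatrix}
= \Xi\,(P\otimes I_n).
\end{align*}
Thus $\mathcal{A}+\mathcal{B}\mathcal{K}=\Xi(P\otimes I_n)\Xi^{-1}$. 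Since $P$ is an $N$-Jordan block at $0$, it is nilpotent with $P^{N}=0$, so $(\mathcal{A}+\mathcal{B}\mathcal{K})^{N}=\Xi(P^{N}\otimes I_n)\Xi^{-1}=0$. This yields internal stability immediately (the only eigenvalue is $0$, inside the unit disk) and, evaluated at $\psi_0$, the deadbeat relation $\psi(N)=(\mathcal{A}+\mathcal{B}\mathcal{K})^{N}\psi_0=0$, hence $x(N)=0$.

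It remains to recover the open-loop input in closed loop. Using $z(0)=0$ and back-substitution in the triangular system $\Xi\xi=\psi_0=\begin{bmatrix} x_0 \\ 0 \end{bmatrix}$ gives $\Xi^{-1}\psi_0={\rm e}_1\otimes x_0$. Propagating through the similarity and the shift action $(P\otimes I_n)^{t}({\rm e}_1\otimes x_0)=(P^{t}{\rm e}_1)\otimes x_0={\rm e}_{t+1}\otimes x_0$ for $t\le N-1$, I obtain $\psi(t)=\Xi({\rm e}_{t+1}\otimes x_0)$. Reading off the control through $u(t)=\begin{bmatrix} K & H \end{bmatrix}\psi(t)=(KX+HZ)({\rm e}_{t+1}\otimes x_0)=U({\rm e}_{t+1}\otimes x_0)$ — the last equality being exactly the top block of \eqref{eq:feedb-K} — proves the claimed input sequence. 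Likewise $y(t)=(CX+DU)({\rm e}_{t+1}\otimes x_0)$, whose entries for $x_0\in\{{\rm e}_1,\dots,{\rm e}_n\}$ are bounded by $s$ via the third constraint of Problem~\ref{prob:sparse-matrix-optim}, so \eqref{constrs:state-input} holds.

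I expect the genuine content to be the intertwining identity $(\mathcal{A}+\mathcal{B}\mathcal{K})\Xi=\Xi(P\otimes I_n)$: once this similarity to the nilpotent shift is established, stability, the terminal-state/deadbeat property, the reproduction of the open-loop input, and constraint satisfaction are essentially corollaries of one algebraic fact plus $P^{N}=0$. The only point demanding genuine care is the Kronecker/block bookkeeping — verifying that right multiplication by $P\otimes I_n$ realizes the intended one-step block shift and that ${\rm e}_{t+1}\otimes x_0$ selects the correct trajectory block — rather than any analytic difficulty.
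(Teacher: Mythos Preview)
Your proposal is correct and follows essentially the same route as the paper: establish invertibility of $\Xi=\begin{bmatrix}X\\Z\end{bmatrix}$ from $X^{(1)}=I_n$, derive the intertwining identity $(\mathcal{A}+\mathcal{B}\mathcal{K})\Xi=\Xi(P\otimes I_n)$, and read off stability, deadbeat, the input formula, and the output constraint. The only cosmetic difference is that you obtain $\psi(t)=\Xi({\rm e}_{t+1}\otimes x_0)$ by propagating the similarity and computing $\Xi^{-1}\psi_0={\rm e}_1\otimes x_0$, whereas the paper verifies the same formula by checking the recursion $Ax(t)+Bu(t)=x(t+1)$ etc.\ directly; the substance is identical.
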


\begin{proof}
We first describe the matrices $X$ and $U$ as 
\begin{align*}
  X &= \begin{bmatrix} X_0 & X_1 & \cdots & X_{N-1} \end{bmatrix}, &
  U &= \begin{bmatrix} U_0 & U_1 & \cdots & U_{N-1} \end{bmatrix}, 
\end{align*}
where $X_t \in {\mathbb{R}}^{n \times n}$, $U_t \in {\mathbb{R}}^{m \times n}$, 
and $t= 0, 1, \ldots, N-1$.
Notice that checking the second constraint of Problem~\ref{prob:sparse-matrix-optim} 
gives rise to the result $X_0 = I_n$.
With this fact, it admits that the matrix $\Psi$ is \emph{non-singular}. 
In other words, we claim that
\begin{align}
  \det \Psi &\neq 0, &
  \Psi &= \begin{bmatrix} X \\ Z \end{bmatrix}
  = \left[\begin{array}{c;{1pt/1pt}c}
      I_{n} & X_{1}~\cdots~X_{N-1}\\  \hdashline[1pt/1pt]
      0_{n(N-1)\times{n}} & I_{n(N-1)}
  \end{array}\right].
  \label{eq:Psi-inv}
\end{align}
Since the matrix $\Psi$ is invertible, it follows that 
the equation \eqref{eq:feedb-K} has the unique solution $(K,H,G,F)$ 
associated with the dynamic compensator \eqref{eq:dynamic-state-feedback}.
Meanwhile, the first constraint of Problem~\ref{prob:sparse-matrix-optim} 
with \eqref{eq:feedb-K} and \eqref{eq:matrices-Z-V} asserts that
\begin{align*}
  ({\mathcal{A} + \mathcal{BK}}) \Psi
  &= \begin{bmatrix} A & 0 \\ 0 & 0 \end{bmatrix}
    \begin{bmatrix} X \\ Z \end{bmatrix}
    + \begin{bmatrix} B & 0 \\ 0 & I \end{bmatrix}
  \begin{bmatrix} U \\ V \end{bmatrix} \\
  &= \begin{bmatrix} AX + BU \\ V \end{bmatrix} \\
  &= \begin{bmatrix} X \\ Z \end{bmatrix} (P \otimes I_{n}) \\
  &= \Psi (P \otimes I_{n}).
\end{align*}
This implies that the closed-loop matrix 
${\mathcal{A}+\mathcal{BK}}=\Psi(P\otimes{I_{n}})\Psi^{-1}$, 
so that it is similar to a nilpotent matrix $(P\otimes{I_{n}})$, 
hence, the closed-loop system \eqref{eq:aug-sys-xz} is internally stable 
and the zero terminal state $x(N) = 0$ is achieved 
for any initial state $\psi(0)$ of the system.

Moreover, we see that the sequences $x(t) = X ({\rm{e}}_{t+1} \otimes x_0)=X_{t}x_{0}$, 
$u(t) = U ({\rm{e}}_{t+1} \otimes x_0)=U_{t}x_{0}$, 
and $z(t)=Z({\rm{e}}_{t+1}\otimes{x}_{0})=Z_{t}x_{0}$ are indeed generated 
by the system \eqref{eq:DT-LTI} with $x(0)=x_{0}$ and 
the controller \eqref{eq:dynamic-state-feedback} with $z(0)=0$. 
As a matter of fact, apparently $x(0)=X(\mathrm{e}_{1}\otimes{x}_{0})=X_{0}x_{0}=x_{0}$ 
and $z(0)=Z(\mathrm{e}_{1}\otimes{x}_{0})=0$. 
Furthermore, based on the fact 
that $(P \otimes I_{n})( {\rm{e}}_{t+1} \otimes x_0 )= {\rm{e}}_{t+2} \otimes x_0$, we have
\begin{align}
  Ax(t) + Bu(t) 
  &= (AX + BU) ({\rm{e}}_{t+1} \otimes {x}_{0}) \notag \\
  &= X (P \otimes {I}_{n}) ({\rm{e}}_{t+1} \otimes {x}_{0}) \notag\\
  &= X ({\rm{e}}_{t+2} \otimes {x}_{0}) \notag \\
  &= x(t+1),       
  \label{eq:X-et}
\end{align}
\begin{align}
  Fz(t) + Gx(t) 
  &= (FZ + GX) ({\rm{e}}_{t+1} \otimes {x}_{0}) \notag \\
  &= Z (P \otimes {I}_{n}) ({\rm{e}}_{t+1} \otimes {x}_{0}) \notag \\
  &= Z ({\rm{e}}_{t+2} \otimes {x}_{0}) \notag \\
  &= z(t+1),       
  \label{eq:Z-et}
\end{align}
\begin{align}
  Hz(t) + Kx(t) 
  &= (HZ + KX) ({\rm{e}}_{t+1} \otimes {x}_{0}) \notag \\
  &= U ({\rm{e}}_{t+1} \otimes {x}_{0}) \notag \\
  &= u(t).        
  \label{eq:U-et}
\end{align}

We next consider the third constraint of Problem \ref{prob:sparse-matrix-optim}, 
whose validity can be inspected by assessing the inequality
\begin{align*}
  -{\mathrm{abs}}(CX + DU) \leq {CX + DU} &\leq {\mathrm{abs}}(CX + DU)
\end{align*}
holds true. 
Therefore, for a given positive vector $s\in\mathbb{R}^{p}$ 
and ${x}_{0}\in\mathcal{X}_{0}$, we have
\begin{align}
  ({CX + DU}) ({\rm{e}}_{t+1} \otimes {x}_{0}) 
  &\leq {\mathrm{abs}}(CX + DU)({\rm{e}}_{t+1} \otimes {x}_{0}) \notag \\
  &\leq s ({\bf{1}}_{n} \otimes {\bf{1}}_{N})^{\top} ({\rm{e}}_{t+1} \otimes {x}_{0}) = s.
  \label{eq:Y-et-ineq}
\end{align}
Notice that $Cx(t)+Du(t) = (CX + DU)({\rm{e}}_{t+1}\otimes{x}_{0})=y(t)$, 
then the output constraint of the form $\{-s\leq y(t) \leq s\}$ 
in \eqref{constrs:state-input} is verified.

According to the above arguments, 
we claim that the sequences \eqref{eq:X-et}, \eqref{eq:Z-et}, \eqref{eq:U-et}, 
and \eqref{eq:Y-et-ineq} indeed satisfy the input-state trajectories of
 LTI dynamics \eqref{eq:DT-LTI} under output constraint \eqref{constrs:state-input}, 
which proves the theorem for realizing sparse feedback control.
\end{proof}

\begin{remark}[Offline vs. Online]
It is clear that realizing sparse feedback control in 
Theorem \ref{them:sparse-feedback-realziation} is based on \emph{offline computation}, 
and hence the computational complexity is low. 
Compared with sparse predictive control 
\cite{nagahara2016discrete,aguilera2017quadratic,iwata2020realization}, 
a real-time feedback iterations is employed 
to ensure closed-loop dynamics and \emph{online optimization} is repeatedly performed 
as a feedback controller to calculate sparse solutions. 
Beyond all doubt, predictive feedback control naturally leads to computational burden 
when the sizes of controlled system is high (e.g., the curse of dimensionality), 
even for using a fast ADMM (alternating direction method of multipliers) 
algorithm \cite{nagahara2016discrete,darup2021fast}.
\end{remark}

Based on the proposed Theorem \ref{them:sparse-feedback-realziation}, 
we can directly give a corollary that establishes the connection between the open-loop 
sparse optimal control solution and the closed-loop sparse optimal control solution.

\begin{corollary}[Equivalence]\label{coro:equiva}
Suppose that Problem \ref{prob:sparse-matrix-optim} has the minimizer $(X, U)$. 
Let $u_{\mathcal{K}}^{\ast}$ be optimal sparse feedback control 
(i.e., closed-loop $\ell_{1}$ optimal control) solution 
using a dynamic linear compensator $\mathcal{K}$ \eqref{eq:dynamic-state-feedback}, 
and $u^{\ast}$ be open-loop $\ell_{1}$ optimal control solution $u^{\ast}$ 
of program \eqref{eq:open-loop-l1-optimal} with 
output constraint \eqref{constrs:state-input}, respectively.
 Then, for ${x}_{0}\in\mathcal{X}_{0}$, it holds that
\begin{align}
\begin{aligned}
             {u^{\ast}} = u_{\mathcal{K}}^{\ast} = Hz + Kx^{\ast}.
\end{aligned}
\end{align}
\end{corollary}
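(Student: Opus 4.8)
The plan is to split the claimed chain of equalities into two parts. The second equality $u_{\mathcal{K}}^{\ast} = Hz + Kx^{\ast}$ is essentially a restatement of the closed-loop control law and is already available from the proof of Theorem~\ref{them:sparse-feedback-realziation}: the top block of the realization equation \eqref{eq:feedb-K} reads $KX + HZ = U$, and evaluating it along the generated trajectory as in \eqref{eq:U-et} gives $u(t) = Hz(t) + Kx(t)$ for every $t$, where $x(t) = X_t x_0$ and $z(t) = Z_t x_0$ are the closed-loop sequences driven by $x_0$. Stacking over $t = 0, \ldots, N-1$ yields $u_{\mathcal{K}}^{\ast} = Hz + Kx^{\ast}$, so the whole task reduces to establishing the first equality $u^{\ast} = u_{\mathcal{K}}^{\ast}$.

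For the first equality, the key observation will be that Problem~\ref{prob:sparse-matrix-optim} decouples, column by column, into $n$ independent copies of the open-loop program \eqref{eq:open-loop-l1-optimal}, one for each initial direction $x_0 = {\rm{e}}_i$. Writing $X = [X_0~\cdots~X_{N-1}]$ and $U = [U_0~\cdots~U_{N-1}]$ as in the proof of Theorem~\ref{them:sparse-feedback-realziation}, I would first read $P \otimes I_{n}$ (with $P$ the nilpotent matrix \eqref{eq:stable-P}) as the block shift for which the $j$-th block column of $X(P \otimes I_{n})$ equals $X_j$ for $j = 1, \ldots, N-1$ and the last block column vanishes. The first constraint $AX + BU = X(P \otimes I_{n})$ then becomes the block recursion $AX_t + BU_t = X_{t+1}$ for $t = 0, \ldots, N-2$ together with the terminal relation $AX_{N-1} + BU_{N-1} = 0$. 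Restricting each block identity to its $i$-th column and setting $x^{(i)}(t) = X_t {\rm{e}}_i$, $u^{(i)}(t) = U_t {\rm{e}}_i$, this is precisely the plant recursion \eqref{eq:DT-LTI} driven from $x^{(i)}(0) = X_0 {\rm{e}}_i = {\rm{e}}_i$ (using $X_0 = I_n$) to $x^{(i)}(N) = 0$; and since $x(N) = A^{N} x_0 + \Phi_N u$, the terminal condition $x^{(i)}(N) = 0$ is exactly the reachability constraint $\Phi_N u = -A^{N} {\rm{e}}_i$ defining the feasible set $\mathcal{U}$ in \eqref{eq:open-loop-l1-optimal}.

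Next I would show that the remaining data separate the same way. The third constraint ${\mathrm{abs}}(CX+DU) \leq s({\bf{1}}_{n} \otimes {\bf{1}}_{N})^{\top}$ has every column of its right-hand side equal to $s$, so restricting to the $i$-th column at time $t$ reproduces $-s \leq Cx^{(i)}(t) + Du^{(i)}(t) \leq s$, i.e.\ the output constraint \eqref{constrs:state-input} for the trajectory started at ${\rm{e}}_i$. Crucially, the objective separates additively: since $\|U\|_{1}$ sums the absolute values of all entries, grouping the columns by their direction index gives $\|U\|_{1} = \sum_{i=1}^{n} \|u^{(i)}\|_{1}$, where $u^{(i)} = [u^{(i)}(0)^{\top}~\cdots~u^{(i)}(N-1)^{\top}]^{\top}$ is exactly the open-loop decision vector of \eqref{eq:open-loop-l1-optimal} for $x_0 = {\rm{e}}_i$. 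Because both the constraints and the cost decouple across $i$, minimizing $\|U\|_{1}$ over the matrix feasible set is equivalent to minimizing each $\|u^{(i)}\|_{1}$ independently, so at the minimizer the $i$-th column block of $U$ coincides with the open-loop optimizer $u^{\ast}$ for initial state ${\rm{e}}_i$. Combining this with the fact that the compensator reproduces $u_{\mathcal{K}}^{\ast}(t) = U_t x_0 = U({\rm{e}}_{t+1} \otimes x_0)$ (Theorem~\ref{them:sparse-feedback-realziation}) establishes $u^{\ast} = u_{\mathcal{K}}^{\ast}$ for each $x_0 \in \{{\rm{e}}_1, \ldots, {\rm{e}}_n\}$.

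The main obstacle will be the decoupling argument of the second paragraph: one must verify carefully that $AX + BU = X(P \otimes I_{n})$ really does reduce, column-wise, to the plant dynamics terminating at the origin, which hinges on the block-shift reading of $P \otimes I_{n}$ and on the normalization $X_0 = I_n$. A secondary point worth stating explicitly is that the separability of the $\ell_{1}$ objective must hold \emph{simultaneously} with the complete decoupling of the feasible set; only then can one conclude that the column-restricted solution is itself optimal for the per-direction open-loop problem. Finally, if the open-loop optimizer is not unique, the identity should be read as asserting that the feedback-generated control is an open-loop optimum attaining the same minimal $\ell_{1}$ cost.
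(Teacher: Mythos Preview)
Your argument is correct. The paper gives no separate proof for this corollary---it is stated immediately after Theorem~\ref{them:sparse-feedback-realziation} with the remark that it follows directly---so your column-wise decoupling of Problem~\ref{prob:sparse-matrix-optim} into $n$ independent copies of the constrained open-loop program \eqref{eq:open-loop-l1-optimal} (one per initial direction $x_0={\rm{e}}_i$) supplies exactly the step the paper leaves implicit: Theorem~\ref{them:sparse-feedback-realziation} only shows that the compensator reproduces $u(t)=U_t x_0$, and identifying this with an open-loop minimizer requires the separability of both the $\ell_1$ cost and the feasible set that you spell out. Your closing caveat about non-uniqueness of the $\ell_1$ minimizer is also appropriate and is not addressed in the paper.
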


\begin{corollary}[Deadbeat Control]\label{coro:deadbeat}
Suppose that Problems \ref{prob:sparse-matrix-optim} 
and \ref{prob:sparse-feedback-realization} have solved,
then the implemented sparse feedback controller $u_{\mathcal{K}}^{\ast}=Hz+Kx^{\ast}$ 
(i.e., closed-loop $\ell_{1}$ optimal control) of discrete LTI plant \eqref{eq:DT-LTI} 
is essentially an $N$-step \emph{deadbeat controller}. 
\end{corollary}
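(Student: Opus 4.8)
The plan is to read off the deadbeat property directly from the similarity relation already established in the proof of Theorem~\ref{them:sparse-feedback-realziation}, namely $\mathcal{A}_{cl} = \Psi(P \otimes I_n)\Psi^{-1}$. The essential observation is that an $N$-step deadbeat controller is, by definition, a feedback law whose closed-loop state matrix is nilpotent with all eigenvalues at the origin and nilpotency index $N$; hence it suffices to verify that $\mathcal{A}_{cl}$ has precisely this structure and that its settling time is exactly $N$.

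First I would recall that $P$ in \eqref{eq:stable-P} is an $N$-Jordan block associated with the zero eigenvalue, so that $P^{N} = 0$ while $P^{N-1} \neq 0$. Using the compatibility of the Kronecker product with matrix powers, $(P \otimes I_n)^{k} = P^{k} \otimes I_n$, these facts propagate to $(P \otimes I_n)^{N} = P^{N} \otimes I_n = 0$ and $(P \otimes I_n)^{N-1} = P^{N-1} \otimes I_n \neq 0$, so that $P \otimes I_n$ is nilpotent of index exactly $N$.

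Next I would transfer these properties to the closed-loop matrix through the similarity. Since $\mathcal{A}_{cl}^{k} = \Psi(P \otimes I_n)^{k}\Psi^{-1}$ for every $k$, one obtains $\mathcal{A}_{cl}^{N} = 0$ and $\mathcal{A}_{cl}^{N-1} \neq 0$; moreover the spectrum of $\mathcal{A}_{cl}$ coincides with that of $P \otimes I_n$, which consists solely of the zero eigenvalue. Consequently the augmented closed-loop response $\psi(t) = \mathcal{A}_{cl}^{t}\psi(0)$ satisfies $\psi(N) = \mathcal{A}_{cl}^{N}\psi(0) = 0$ for every initial condition $\psi(0)$, and in particular the plant state $x(t)$, being the leading subvector of $\psi(t)$, is steered to $x(N) = 0$ in precisely $N$ steps. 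This is exactly the defining behavior of an $N$-step deadbeat controller, which proves the corollary.

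The main obstacle I anticipate is not computational but definitional: one must fix the precise characterization of the deadbeat controller being invoked (a nilpotent closed-loop matrix with all poles at the origin and settling time equal to the nilpotency index) and argue that $N$ is the \emph{exact}, rather than merely an upper bound on, the settling time. The latter rests on the sharp statement $P^{N-1} \neq 0$, which is what distinguishes the genuine $N$-step deadbeat property from the weaker assertion of finite-time convergence already contained in Theorem~\ref{them:sparse-feedback-realziation}.
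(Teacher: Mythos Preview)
Your proposal is correct and follows the same line of reasoning the paper relies on: the paper does not give a formal proof of this corollary but only the subsequent remark, which simply appeals to the facts (established in the proof of Theorem~\ref{them:sparse-feedback-realziation}) that the compensator drives $x(t)$ to the origin in $N$ steps and that all eigenvalues of $\mathcal{A}_{cl}$ lie at the origin. Your argument makes this precise via the similarity $\mathcal{A}_{cl}=\Psi(P\otimes I_n)\Psi^{-1}$ and, by additionally invoking $P^{N-1}\neq 0$, goes a step further than the paper in pinning down the settling time as \emph{exactly} $N$ rather than merely at most $N$.
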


\begin{remark}
Regarding the deadbeat control, since the designed compensator $\mathcal{K}$ 
brings the state $x(t)$ to the origin in $N$ steps (satisfying $x(N)=0$), 
which places all of the eigenvalues of the augmented closed-loop 
system matrix $\mathcal{A}+\mathcal{B}\mathcal{K}$
at the origin in the complex plane. 
\end{remark}

\section{Extension: Tracking problem}\label{sec:sfb-tracking}

In this section, we extend the above result of sparse feedback control 
to tracking control problem \citep[Chapter~8]{lewis2012optimal}. 

We start by giving a step-type reference signal $r(t)\in{\mathbb{R}}^{p}$ as
\begin{align}
  r(t) &= \left\{ \begin{aligned} r_{-}, && t<0 \\ r_{+}, && t\geq 0, \end{aligned} \right.
   \label{eq:reference-dynamics}
\end{align}
where $r_{-}\in\mathbb{R}^{p}$ and $r_{+}\in\mathbb{R}^{p}$ are constant vectors.
The purpose of tracking problem is to design a dynamic tracking compensator 
such that the performance output tracks a reference input with zero steady-state error 
by using additional feedforward gains. 
For this reason, we define the tracking error by $e(t)=y(t) - r(t)$, 
where $y(t)$ is a performance output signal stated in LTI plant \eqref{eq:DT-LTI}.
Meanwhile, we make the following assumption before giving an effective tracking controller. 
\begin{assumption}\label{assum:tracking-prob}
For a tracking problem, assume that the performance output signal $y(t)\in\mathbb{R}^{p}$ 
and the control signal $u(t)\in\mathbb{R}^{m}$ in LTI dynamics \eqref{eq:DT-LTI} be of 
the same size (i.e., $p=m$) and take the matrix $D=0_{m}$.
\end{assumption}

It is known that the performance output $y(t)\in\mathbb{R}^{m}$ can track any reference 
signal $r(t)\in\mathbb{R}^{m}$ of \eqref{eq:reference-dynamics} in the steady-state if 
\begin{align}
  {\mathrm{rank}}~\begin{bmatrix} I - A & B \\ C & 0 \end{bmatrix} &= n + m.
  \label{eq:rank-tracking}
\end{align}
As already reported in Section~\ref{sec:sfb}, 
an analogous dynamic tracking compensator $\mathcal{K}_{r}$ can be applied to 
the discrete LTI plant \eqref{eq:DT-LTI} by adding a prescribed reference 
input $r(t)\in\mathbb{R}^{m}$ to the control actuator \eqref{eq:dynamic-state-feedback}.
To this end, a dynamic tracking compensator $\mathcal{K}_{r}$ for the plant can be designed as
\begin{align}
  \mathcal{K}_{r}: \quad
  \begin{aligned}
    z_{r}(t+1) &= Fz_{r}(t) + Gx(t) + Lr(t), & z_{r}(0) &= 0, \\
    u(t) &= Hz_{r}(t) + Kx(t) + Mr(t),
    \label{eq:dynamic-tracking-comp}
  \end{aligned}
\end{align}
where $L$ and $M$ represent the feedforward gain matrices with suitable sizes, 
and $r(t)\in\mathbb{R}^{m}$ is a specific reference input \eqref{eq:reference-dynamics}. 
Notice that here the initial value $z_{r}(0)$ of tracking compensator ${\mathcal{K}}_{r}$ 
is set to zero (i.e., $z_{r}(0)=0$).
Figure~\ref{fig:feedforward_tracking} shows the closed-loop system composed of 
the plant \eqref{eq:DT-LTI} and the controller \eqref{eq:dynamic-tracking-comp}.

\begin{figure}
  \centering
 \includegraphics[width=0.8\linewidth]{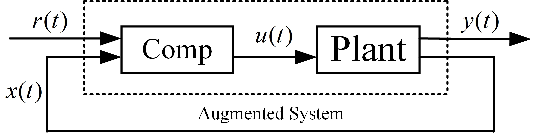}
  \caption{Feedforward tracking control system: $r(t)$ is reference signal; 
    $y(t)$ is the performance output which must track a specified reference input $r(t)$;
     ``${\text{Comp}}$'' represents a dynamic tracking compensator 
    \eqref{eq:dynamic-tracking-comp} applied to a discrete LTI plant \eqref{eq:DT-LTI}.}
  \label{fig:feedforward_tracking}
\end{figure}

For a preferable reference input tracking, 
we employ \emph{the difference or variation of the control inputs} 
$\sum_{t=0}^{N-1}\| u(t+1) - u(t)\|_{1}$ as the performance index, 
which is referred to as \emph{minimum attention control} 
\cite{brockett1997minimum,nagahara2020approach,lee2021existence}.
We slightly relax the constraints in the previous sections 
by removing the state and input constraints \eqref{constrs:state-input}.
As a result, we formulate the following \emph{tracking (minimum attention) control problem}
that we aim to solve here:
Find $F$, $G$, $H$, and $K$ of \eqref{eq:dynamic-tracking-comp} such that
\begin{enumerate}
  \item[(i)] the dynamic tracking compensator \eqref{eq:dynamic-tracking-comp} 
    stabilizes the plant \eqref{eq:DT-LTI} and 
  \item[(ii)] for any $x_{0} \in \mathcal{X}_{0}$ 
    with $z_{r}(0) = 0$ and $r(t) \equiv 0$, 
    the controller \eqref{eq:dynamic-tracking-comp} 
    generates an input sequence $\{u(t)\}_{t=0}^{N-1}$,
    which minimizes $\sum_{t=0}^{N-1} \| u(t+1) - u(t) \|_{1}$ 
    subject to $x(N) = 0$ and $z(N) = 0$ for a positive $N$.
\end{enumerate}
Then, determine $L$ and $M$ of \eqref{eq:dynamic-tracking-comp} such that 
the steady state gain of the closed-loop system from $r(t)$ to $y(t)$ 
is the identity and that from $r(t)$ to $z_r(t)$ is zero.

\begin{remark}
When we have a solution to the above problem, 
we see that $y(t)$ tracks $r(t)$ without steady state error 
owing to the selected steady state gain.
We also observe that $u(t)$ achieves minimum attention 
for any $x_0\in\mathbb{R}^{n}$ due to linearity of the system.
Moreover, since $z(N) = 0$ is achieved in the steady state for any $r_{+}$, 
the condition $z(0) = 0$ is automatically satisfied
whenever we have another $r_{+}$ as a new reference signal. 
\end{remark}

The closed-loop behavior can be described in an augmented description
\begin{align}
\begin{aligned}
  \psi_{r}(t+1) &= ({\mathcal{A} + \mathcal{BK}}) \psi_{r}(t) + \mathcal{M}_{r}r(t), &
  \psi_{r}(0) &= \psi_{r0}, \\
  y(t) &=\mathcal{C}\psi_{r}(t),
  \label{eq:aug-xz-track}
\end{aligned}
\end{align}
where 
\begin{align*}
  \psi_{r}(t) &= \begin{bmatrix} x(t) \\ z_{r}(t) \end{bmatrix}, &
  \mathcal{M}_{r} &= \begin{bmatrix} BM \\ L \end{bmatrix}, 
\end{align*}
and the other matrices $\mathcal{A}$, $\mathcal{B}$, $\mathcal{K}$, and 
$\mathcal{C}$ have been defined for \eqref{eq:aug-sys-xz}.

Based on the problem setup above, 
we first consider the following minimum attention control problem.

\begin{problem}[Minimum Attention Control]\label{prob:track-attention}
Find the matrices $X \in {\mathbb{R}}^{n \times nN}$ and $U \in {\mathbb{R}}^{m \times nN}$ 
such that the obtained $U$ solves a minimum attention control problem
\begin{align*}
  \min_{X, U} \quad & \|U(P\otimes{I}_{n})-U\|_{1} \\
  \textrm{s.t.} \quad 
  & AX + BU = X(P \otimes I_{n})\\
  & I_{n} = X ({\rm{e}}_1 \otimes I_{n}).
\end{align*}
\end{problem}

Looking for the solution $(X,U)$ of Problem \ref{prob:track-attention} is 
always accessible because, the above problem is a convex program. 
Then, with $(K,H,G,F)$ of \eqref{eq:feedb-K} and \eqref{eq:matrices-Z-V},
the resultant closed-loop system \eqref{eq:aug-xz-track} always assures internal stability, 
that is, the condition 
${\mathcal{A}+\mathcal{B}\mathcal{K}}=\Psi(P\otimes{I_{n}})\Psi^{-1}$ holds, 
as discussed in Section~\ref{sec:sfb}.

We next deal with tracking problem.
Due to the fact that the closed-loop augmented system \eqref{eq:aug-xz-track} is 
internally stable with matrices $(K,H,G,F)$,
it admits a unique steady-state $\psi_{\infty} = [x_{\infty}^{\top}~z_{\infty}^{\top}]^{\top}$ 
for a desired reference input $r_{+} = \lim_{t\to\infty}r(t)$.

More precisely, we have the following matrix equation
\begin{align}
  \begin{bmatrix} \psi_{\infty} \\ y_{\infty} \end{bmatrix}
  &= \begin{bmatrix} 
    \mathcal{A} + \mathcal{B}\mathcal{K} & \mathcal{M}_{r} \\ 
    \mathcal{C} & 0 \end{bmatrix}
    \begin{bmatrix} \psi_{\infty} \\ r_{+} \end{bmatrix}.
  \label{eq:steady-aug-psi-y-track}
\end{align}
If $y_{\infty}=Cx_{\infty}={r}_{+}$ for any reference $r_{+}$, 
the output $y(t)$ tracks reference $r(t)$ with no steady-state tracking error.
If $z_{\infty}=0$ for any $r_{+}$, we can have $z(0)=0$ 
whenever the reference signal changes again after a steady-state is achieved.

In what follows, we are going to achieve tracking error elimination with $z_{\infty}=0$ 
by assigning feedforward tracking gains $M$ and $L$, leading to the following lemma.

\begin{lemma}[Steady-State Tracking]\label{lemma:tracking-steady-state}
Supposed that Assumption \ref{assum:tracking-prob} and 
rank condition \eqref{eq:rank-tracking} of steady-state tracking hold, 
and Problem \ref{prob:track-attention} has the miniminer $(X,U)$. 
Then, the priori gain matrices $(K,H,G,F)$ of 
compensator $\mathcal{K}_{r}$ \eqref{eq:dynamic-tracking-comp} can be uniquely 
determined by feedback realization \eqref{eq:feedb-K} and \eqref{eq:matrices-Z-V}. 
In addition, if the following conditions hold 
\begin{enumerate}
  \item[(i)] ${\mathrm{det}}~(I - (A + BK)) \neq 0$,
  \item[(ii)] ${\mathrm{det}}~(I - F) \neq 0$,
\end{enumerate}
then the feedforward tracking gain matrices $(M,L)$ in 
compensator ${\mathcal{K}_{r}}$ \eqref{eq:dynamic-tracking-comp} can be derived by
\begin{align}
  M &= \big( C\big(I - (A + BK)\big)^{-1}B \big)^{-1}, \label{eq:feedforward-ref-gain-M} \\
  L &= -G \big( I - (A + BK) \big)^{-1} BM. \label{eq:feedforward-ref-gain-L}
\end{align}
Based on the obtained matrices $(K,H,G,F,M,L)$,
for all initial state $x_{0}\in\mathbb{R}^{n}$ and any reference $r_{+}\in\mathbb{R}^{m}$, 
there exist the unique steady-state values $(x_{\infty},z_{\infty})$ 
such that $y_{\infty}=r_{+}$ and $z_{\infty}=0$ are achieved in the steady-state.
\end{lemma}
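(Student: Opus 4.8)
The plan is to split the argument into three parts mirroring the statement: unique solvability of the realization equation for $(K,H,G,F)$, well-posedness of the feedforward formulas \eqref{eq:feedforward-ref-gain-M}--\eqref{eq:feedforward-ref-gain-L}, and verification that the resulting steady state satisfies $z_{\infty}=0$ and $y_{\infty}=r_{+}$. For the first part I would reuse the mechanism of Theorem~\ref{them:sparse-feedback-realziation} verbatim: since Problem~\ref{prob:track-attention} carries the same equality constraint $I_{n}=X(\mathrm{e}_{1}\otimes I_{n})$, its minimizer forces $X_{0}=I_{n}$, so the matrix $\Psi=[X^{\top}\ Z^{\top}]^{\top}$ is block upper-triangular with identity diagonal blocks and hence nonsingular as in \eqref{eq:Psi-inv}. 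Then \eqref{eq:feedb-K} has the unique solution $(K,H,G,F)$, and the same similarity $\mathcal{A}_{cl}=\Psi(P\otimes I_{n})\Psi^{-1}$ shows $\mathcal{A}_{cl}$ is nilpotent; in particular $I-\mathcal{A}_{cl}$ is invertible, a fact I will use below to guarantee a unique steady state of \eqref{eq:aug-xz-track}.

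The crux — and the step I expect to be the main obstacle — is showing that the gain $M$ in \eqref{eq:feedforward-ref-gain-M} is well defined, i.e.\ that $C\big(I-(A+BK)\big)^{-1}B$ is invertible. Here I would exploit that the rank condition \eqref{eq:rank-tracking} is invariant under state feedback, via the factorization
\begin{equation*}
\begin{bmatrix} I-(A+BK) & B \\ C & 0 \end{bmatrix}
= \begin{bmatrix} I-A & B \\ C & 0 \end{bmatrix}
\begin{bmatrix} I & 0 \\ -K & I \end{bmatrix}.
\end{equation*}
Since the right factor is invertible (its determinant is one), the left-hand matrix has the same rank $n+m$ as in \eqref{eq:rank-tracking}; under Assumption~\ref{assum:tracking-prob} ($p=m$) it is square, hence invertible. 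A Schur-complement expansion along the $(1,1)$ block, legitimate because condition~(i) gives $\det\big(I-(A+BK)\big)\neq 0$, then yields $\det\big(C(I-(A+BK))^{-1}B\big)\neq 0$, so $M$ exists.

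With $M$ in hand I would derive $L$ and verify the steady state from \eqref{eq:steady-aug-psi-y-track}. Writing $\psi_{\infty}=[x_{\infty}^{\top}\ z_{\infty}^{\top}]^{\top}$, the identity $\psi_{\infty}=\mathcal{A}_{cl}\psi_{\infty}+\mathcal{M}_{r}r_{+}$ splits into
\begin{align*}
\big(I-(A+BK)\big)x_{\infty} &= BH z_{\infty} + BM r_{+},\\
(I-F)z_{\infty} &= G x_{\infty} + L r_{+},
\end{align*}
with $y_{\infty}=Cx_{\infty}$. Imposing $z_{\infty}=0$, the first line and condition~(i) give $x_{\infty}=(I-(A+BK))^{-1}BM r_{+}$, and forcing $y_{\infty}=Cx_{\infty}=r_{+}$ for every $r_{+}$ fixes $M$ exactly as in \eqref{eq:feedforward-ref-gain-M}. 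Using condition~(ii) to invert $I-F$ in the second line, $z_{\infty}=0$ is equivalent to $G x_{\infty}+L r_{+}=0$, which upon substituting $x_{\infty}$ and cancelling $r_{+}$ yields $L=-G(I-(A+BK))^{-1}BM$, i.e.\ \eqref{eq:feedforward-ref-gain-L}. Finally, since $I-\mathcal{A}_{cl}$ is invertible, \eqref{eq:steady-aug-psi-y-track} has a unique solution $\psi_{\infty}$; the pair $\big((I-(A+BK))^{-1}BM r_{+},\,0\big)$ just constructed satisfies both lines, so by uniqueness it is the steady state for every $x_{0}\in\mathbb{R}^{n}$ and every $r_{+}\in\mathbb{R}^{m}$, giving $z_{\infty}=0$ and $y_{\infty}=r_{+}$ as claimed. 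The only genuinely nontrivial ingredient is the invertibility of $C(I-(A+BK))^{-1}B$; everything else is block linear algebra.
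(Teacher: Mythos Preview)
Your proof is correct and follows the same steady-state derivation as the paper: assume $z_{\infty}=0$, solve the $x$-block for $x_{\infty}$ under condition~(i), read off $M$ from $Cx_{\infty}=r_{+}$, then read off $L$ from the $z$-block under condition~(ii). In fact you are more careful than the paper's own proof, since you explicitly show via the factorization of \eqref{eq:rank-tracking} and a Schur complement that $C\big(I-(A+BK)\big)^{-1}B$ is invertible (so $M$ exists), and you justify uniqueness of the steady state through the nilpotency of $\mathcal{A}_{cl}$ --- both points the paper simply takes for granted.
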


\begin{proof}
Since the gain matrices $(K,H,G,F)$ can previously be calculated by \eqref{eq:feedb-K} 
and \eqref{eq:matrices-Z-V}, the augmented closed-loop system \eqref{eq:aug-xz-track} is 
internally stable, as reported in Theorem~\ref{them:sparse-feedback-realziation}.
We here reformulate the augmented system \eqref{eq:steady-aug-psi-y-track} as
\begin{align}
  \begin{bmatrix} x_{\infty} \\ z_{\infty} \end{bmatrix}
  &= \begin{bmatrix} A + BK & BH \\ G & F \end{bmatrix}
    \begin{bmatrix} x_{\infty} \\ z_{\infty}\end{bmatrix}
    + \begin{bmatrix} BM \\ L \end{bmatrix} {r}_{+}.
  \label{eq:steady-aug-xz-track}
\end{align}
Suppose that the zero steady-state $z_{\infty} = 0$, we then have 
\begin{align*}
  x_{\infty} &= (A + BK) x_{\infty} + BM r_{+}.
\end{align*}
This implies that, if the matrix $I-(A+BK)$ is invertible,
\begin{align*}
  x_{\infty} = (I - (A + BK))^{-1} BM r_{+}.
\end{align*}
Therefore, we see the result $y_{\infty} = Cx_{\infty} = r_{+}$ 
if $M$ is selected as \eqref{eq:feedforward-ref-gain-M}.

On the other hand, the steady-state of the tracking compensator $z_{\infty}$ 
in \eqref{eq:steady-aug-xz-track} satisfies
\begin{align*}
  z_{\infty} = G x_{\infty} + F z_{\infty} + L r_{+}.
\end{align*}
Consequently, we derive 
\begin{align*}
  z_{\infty} &= (I - F)^{-1} (G x_{\infty} + L r_{+}) \notag \\
  &= (I - F)^{-1} \big( G(I - (A + BK))^{-1} BM + L \big) r_{+}
\end{align*}
when the matrix $(I-F)$ is invertible. Thus, we see that $z_{\infty}=0$ if
the feedforward gain matrix $L$ meets \eqref{eq:feedforward-ref-gain-L}.
Therefore, the feedforward gains \eqref{eq:feedforward-ref-gain-M} 
and \eqref{eq:feedforward-ref-gain-L} make error cancellation for achieving tracking.
\end{proof}

\begin{remark}[Illustrative conditions]
Checking determinant conditions (i) and (ii) in Lemma \ref{lemma:tracking-steady-state} 
can be equivalently formulated 
with optimal solution $X$ to Problem \ref{prob:track-attention}.
In fact, we have 
\begin{align*}
  I - (A+BK) &= I_{n} - X_{1}, \\
  I - F &= I_{n(N-1)} - \left[\begin{array}{c;{1pt/1pt}c}
      {-X}_{1}~\cdots~{-X}_{N-2} & {-X}_{N-1} \\ \hdashline[1pt/1pt]
      I_{n(N-2)}  & 0_{n(N-2)\times{n}}
    \end{array}\right]
\end{align*}
with $\mathcal{A} + \mathcal{B}\mathcal{K} = \Psi (P \otimes {I_{n}}) \Psi^{-1}$ 
and \eqref{eq:Psi-inv}.
\end{remark}

\begin{proposition}\label{prop:feedforward-tracking}
Under Assumption \ref{assum:tracking-prob} and Lemma \ref{lemma:tracking-steady-state}, 
the output $y(t)$ tracks reference $r(t)$ with no steady-state error 
via dynamic tracking compensator \eqref{eq:dynamic-tracking-comp}, 
where the tracking control input realizes minimum attention.
\end{proposition}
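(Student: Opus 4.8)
The plan is to split the statement into its two assertions---zero steady-state tracking error and minimum attention of the realized input---and to lean on the structural facts already in hand: internal stability of \eqref{eq:aug-xz-track} (equivalently, that $\mathcal{A}_{cl}=\Psi(P\otimes I_{n})\Psi^{-1}$ is similar to the nilpotent $P\otimes I_{n}$) together with the steady-state identities supplied by Lemma~\ref{lemma:tracking-steady-state}.

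First I would establish convergence to the steady state. Since $P$ is an $N$-Jordan block at $0$, we have $(P\otimes I_{n})^{N}=0$, so $\mathcal{A}_{cl}$ is nilpotent, every eigenvalue lies at the origin, and $I-\mathcal{A}_{cl}$ is invertible. Propagating \eqref{eq:aug-xz-track} with the step reference held at $r_{+}$ for $t\ge 0$ gives $\psi_{r}(t)=\mathcal{A}_{cl}^{t}\psi_{r}(0)+\sum_{k=0}^{t-1}\mathcal{A}_{cl}^{t-1-k}\mathcal{M}_{r}r_{+}$; nilpotency makes the homogeneous term vanish for $t\ge N$ and collapses the convolution sum to the unique fixed point $\psi_{\infty}=(I-\mathcal{A}_{cl})^{-1}\mathcal{M}_{r}r_{+}$ of \eqref{eq:steady-aug-xz-track}. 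Hence $\psi_{r}(t)\to\psi_{\infty}$ (in fact in finitely many steps) for every initial $x_{0}$ and every $r_{+}$. Invoking Lemma~\ref{lemma:tracking-steady-state}, with $M$ and $L$ chosen as \eqref{eq:feedforward-ref-gain-M}--\eqref{eq:feedforward-ref-gain-L} this steady state satisfies $y_{\infty}=Cx_{\infty}=r_{+}$ and $z_{\infty}=0$, so the tracking error obeys $\lim_{t\to\infty}e(t)=y_{\infty}-r_{+}=0$. The companion identity $z_{\infty}=0$ guarantees the compensator resets to its zero initial condition, so the same guarantee persists across later step changes of $r$.

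For the minimum-attention claim I would mirror the equivalence argument of Corollary~\ref{coro:equiva}. Passing to the error coordinates $\tilde{x}(t)=x(t)-x_{\infty}$, $\tilde{z}(t)=z_{r}(t)$, $\tilde{u}(t)=u(t)-u_{\infty}$ turns the tracking task into the regulation problem solved in Section~\ref{sec:sfb}, while leaving the control variation unchanged, since $u(t+1)-u(t)=\tilde{u}(t+1)-\tilde{u}(t)$---the constant feedforward contribution $Mr_{+}$ cancels in the difference. Because $(K,H,G,F)$ are realized from \eqref{eq:feedb-K}--\eqref{eq:matrices-Z-V} using the optimal $(X,U)$ of Problem~\ref{prob:track-attention}, the compensator reproduces the columns $\tilde{u}(t)=U(\mathrm{e}_{t+1}\otimes\tilde{x}_{0})$ along the trajectory exactly as in Theorem~\ref{them:sparse-feedback-realziation}. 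Using $(P\otimes I_{n})(\mathrm{e}_{t+1}\otimes\tilde{x}_{0})=\mathrm{e}_{t+2}\otimes\tilde{x}_{0}$, the realized variation $\tilde{u}(t+1)-\tilde{u}(t)=(U(P\otimes I_{n})-U)(\mathrm{e}_{t+1}\otimes\tilde{x}_{0})$, so the time-domain attention along the basis trajectories is tabulated exactly by the matrix $U(P\otimes I_{n})-U$ whose $\ell_{1}$ cost $\|U(P\otimes I_{n})-U\|_{1}$ is minimized by construction of $U$; hence the realized input achieves minimum attention.

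I expect the main obstacle to be this last linkage: rigorously identifying the realized control-variation cost $\sum_{t}\|u(t+1)-u(t)\|_{1}$ with the matrix objective $\|U(P\otimes I_{n})-U\|_{1}$ of Problem~\ref{prob:track-attention}. This requires (a) checking that the feedforward offset drops out of every difference $u(t+1)-u(t)$, including the boundary term at $t=N-1$ where $(P\otimes I_{n})(\mathrm{e}_{N}\otimes\tilde{x}_{0})=0$ forces $\tilde{u}(N)=0$, and (b) confirming that the compensator-generated transient coincides column-by-column with $U$ precisely as in Theorem~\ref{them:sparse-feedback-realziation}, so that optimality of $U$ transfers to the closed-loop realization rather than merely upper-bounding it.
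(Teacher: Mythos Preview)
The paper does not supply a proof of Proposition~\ref{prop:feedforward-tracking}; it is stated as an immediate corollary of Lemma~\ref{lemma:tracking-steady-state} and the surrounding construction, with no argument beyond that. Your proposal therefore does not so much match the paper's proof as \emph{supply} one, and the route you take---nilpotency of $\mathcal{A}_{cl}=\Psi(P\otimes I_{n})\Psi^{-1}$ for finite-time convergence to $\psi_{\infty}$, Lemma~\ref{lemma:tracking-steady-state} for $y_{\infty}=r_{+}$ and $z_{\infty}=0$, and an error-coordinate reduction to the regulation problem of Section~\ref{sec:sfb}---is exactly the argument the paper's scaffolding points to.

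Your error-coordinate computation is correct: using $z_{\infty}=0$ together with the steady-state identities $x_{\infty}=(A+BK)x_{\infty}+BMr_{+}$ and $Gx_{\infty}+Lr_{+}=0$ (the latter following from $z_{\infty}=(I-F)^{-1}(Gx_{\infty}+Lr_{+})=0$), the shifted variables $(\tilde{x},\tilde{z},\tilde{u})$ obey precisely \eqref{eq:DT-LTI}--\eqref{eq:dynamic-state-feedback} with $\tilde{z}(0)=0$, so Theorem~\ref{them:sparse-feedback-realziation} applies verbatim and yields $\tilde{u}(t)=U(\mathrm{e}_{t+1}\otimes\tilde{x}_{0})$. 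The boundary check you flag also works: since $U(P\otimes I_{n})=[U_{1}\ \cdots\ U_{N-1}\ 0]$, the last block column of $U(P\otimes I_{n})-U$ is $-U_{N-1}=\tilde{u}(N)-\tilde{u}(N-1)$, consistent with $\tilde{u}(N)=0$.

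The residual caveat you identify is genuine and is in fact a limitation of the proposition as stated in the paper rather than of your argument: the objective $\|U(P\otimes I_{n})-U\|_{1}$ of Problem~\ref{prob:track-attention} equals the aggregate attention over the basis trajectories $\tilde{x}_{0}\in\{\mathrm{e}_{1},\ldots,\mathrm{e}_{n}\}$, so optimality of the realized input is established in that aggregate (or per-basis-direction) sense, exactly parallel to Corollary~\ref{coro:equiva}. For an arbitrary $\tilde{x}_{0}=x_{0}-x_{\infty}$ the closed-loop variation is the corresponding linear combination, and Problem~\ref{prob:track-attention} does not claim pointwise optimality there. Since the paper's phrasing ``realizes minimum attention'' is to be read in this same relative/basis sense inherited from Section~\ref{sec:sfb}, your proof is complete at the paper's level of precision.
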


\section{Numerical simulations}\label{sec:simulations}

In this section, we perform several numerical examples to illustrate the effectiveness of the designed sparse feedback control, which gives a closed-loop optimal solution by using a dynamic linear compensator.
\subsection{Single input}
At first, we consider a single-input control system modeled as a linearized cart-pole system, and the parameters benchmark are similar to \citep[Sec.~VI]{blanchini2003relatively}, in which the mass of the cart is $0.29~[\mathrm{kg}]$, mass of the pole is $0.1~[\mathrm{kg}]$, length of the pole is $1~[\mathrm{m}]$, gravity acceleration is $9.81~[\mathrm{m}/\mathrm{s}^{2}]$, and friction is neglected.

We then execute time-discretization of this continuous system using zero-order-hold (ZoH) with sampling $\Delta{t}=0.3~[\mathrm{s}]$, then the system matrices of form \eqref{eq:DT-LTI} are given by
\begin{align*}
    A = 
    \begin{bmatrix}
        1 & 0.3 & 0.1377 & 0.0143\\
        0 & 1 & 0.8256 & 0.1377\\
        0 & 0 & 0.4628 & 0.2441 \\
        0 & 0 & -3.2198 & 0.4628 
    \end{bmatrix},
    \quad
    B=
    \begin{bmatrix}
    0.1514\\
    0.9850\\
    -0.1404\\
    -0.8416
    \end{bmatrix}.
\end{align*}
Meanwhile, the output matrices with respect to state-input constraints \eqref{constrs:state-input} are set to
\begin{align*}
     C = \begin{bmatrix}
       0 & 0 & 1 & 0\\
       0 & 0 & 0 & 0
   \end{bmatrix},
   \quad \quad
   D =
   \begin{bmatrix}
       0\\
       1
   \end{bmatrix},
\end{align*}
then we have $y^{\top}(t)=\big[x_{3}(t)~u(t)\big]$, which means that the enforced constrained state is only third component of the state $x_{3}(t)$ and the imposed constrained input is $u(t)$.
By selecting the suitable variable $s$, it gives state and input constraints as follows
\begin{align*}
    |x_{3}(t)|\leq 1,\quad \quad |u(t)|\leq 1.
\end{align*}

Next, we simulate the discrete-time controlled system, the target is to drive the cart state from a non-zero initial state $x_{0}^{\top}=[0.9453,~0.7465,~0.7506,~0.4026]$ to the zero terminal state $x(N)=0$ in a finite $N=40$ steps (i.e, taking $\textrm{Tfinal}=({5*N*\Delta{t}})/{4}=15~[\mathrm{s}]$ in a state-space representation).
In order to realize the sparse feedback controller, we need to solve Problems ~\ref{prob:sparse-matrix-optim} and \ref{prob:sparse-feedback-realization} to seek the closed-loop $\ell_{1}$ optimal solution. 
By computing, the total CPU time in PC is $0.38~[\mathrm{s}]$ in MATLAB R2020b using \texttt{CVX} \cite{grant2008cvx}, and the found optimal value $\|{U}^{\ast}\|_{1}=6.4204$. 

\begin{figure}[!t]
\begin{center}
\includegraphics[width=0.6\linewidth]{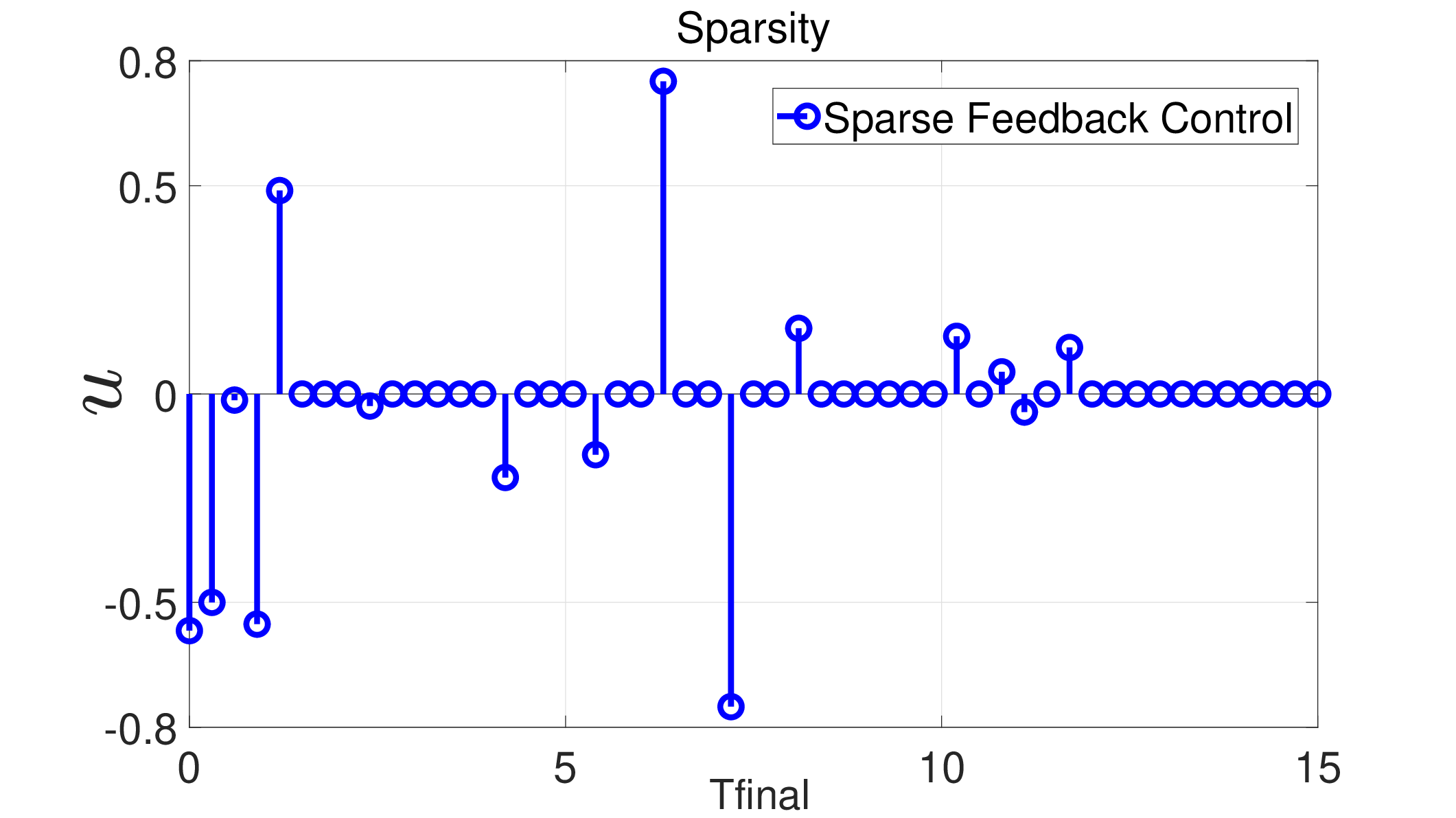}
\includegraphics[width=0.6\linewidth]{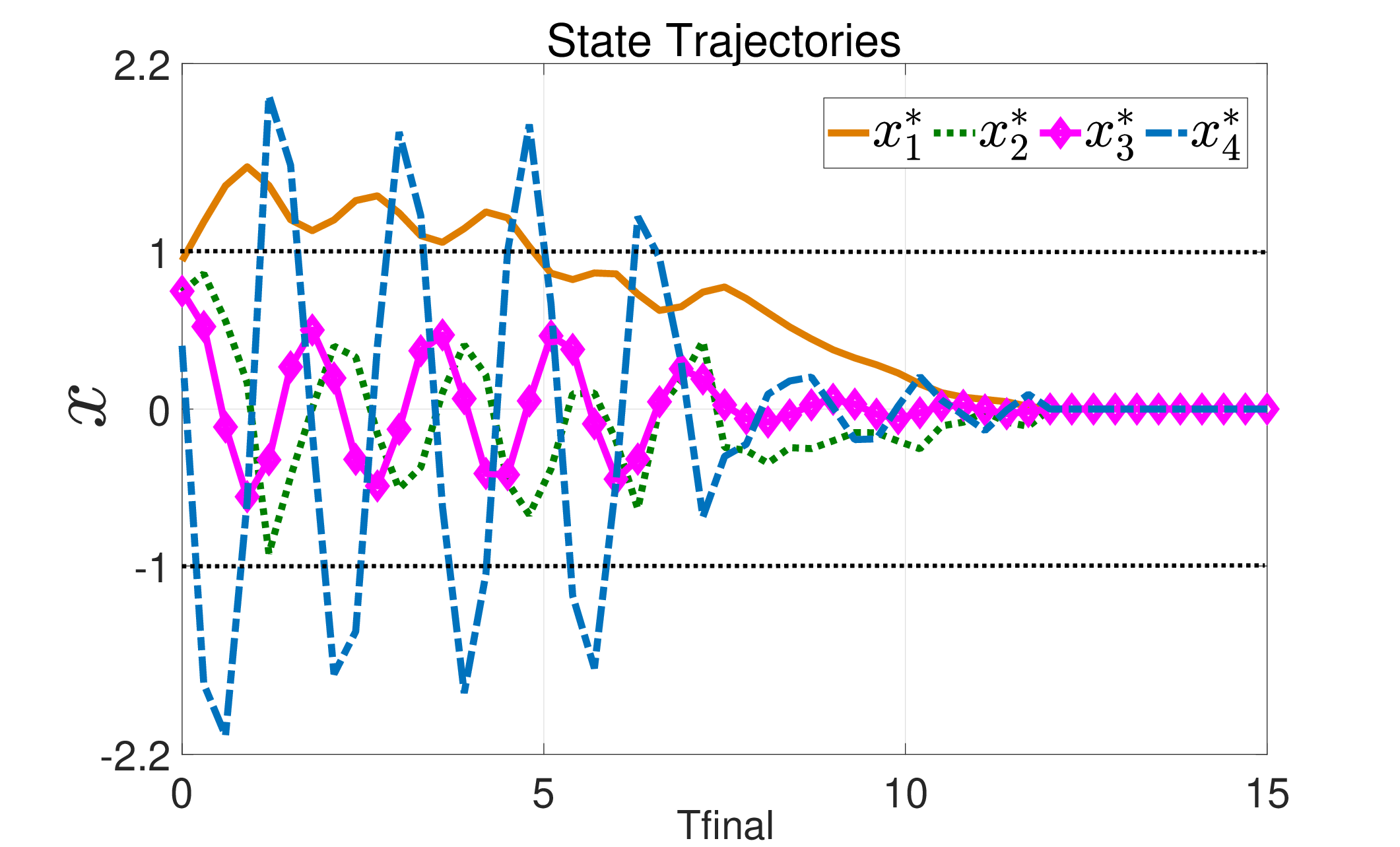}    
\caption{Single input case: optimal sparse feedback control $u^{\ast}(t)$ (top) and the regarding optimal state trajectories $x^{\ast}(t)$ (bottom). The black dot line represents the constraint on the pole angle $|x_{3}|\leq1$.} 
\label{fig:SISO-Input-state}
\end{center}
\end{figure}
Figure~\ref{fig:SISO-Input-state} illustrates the related closed-loop optimal input and state trajectories, respectively. 
It reflects the input sparsity on sparse feedback control, in which the control sequence is with less active components, and the optimal control meets constraint $|u(t)|\leq1$. 
In addition, it plots the optimal state trajectories, where the pole angle $x_{3}$ fluctuates between the bounds $-1$ and $1$, satisfying the prescribed state constraint $|x_{3}(t)|\leq1$. Meanwhile, the trajectories of four different states start from an initial state $x_{0}$ and eventually converge to zero state as time tends to a fixed steps under the dynamic compensator \eqref{eq:dynamic-state-feedback}, this implies that the closed-loop stabilization is achieved. 

\begin{figure}[!t]
\begin{center}
\includegraphics[width=0.6\linewidth]{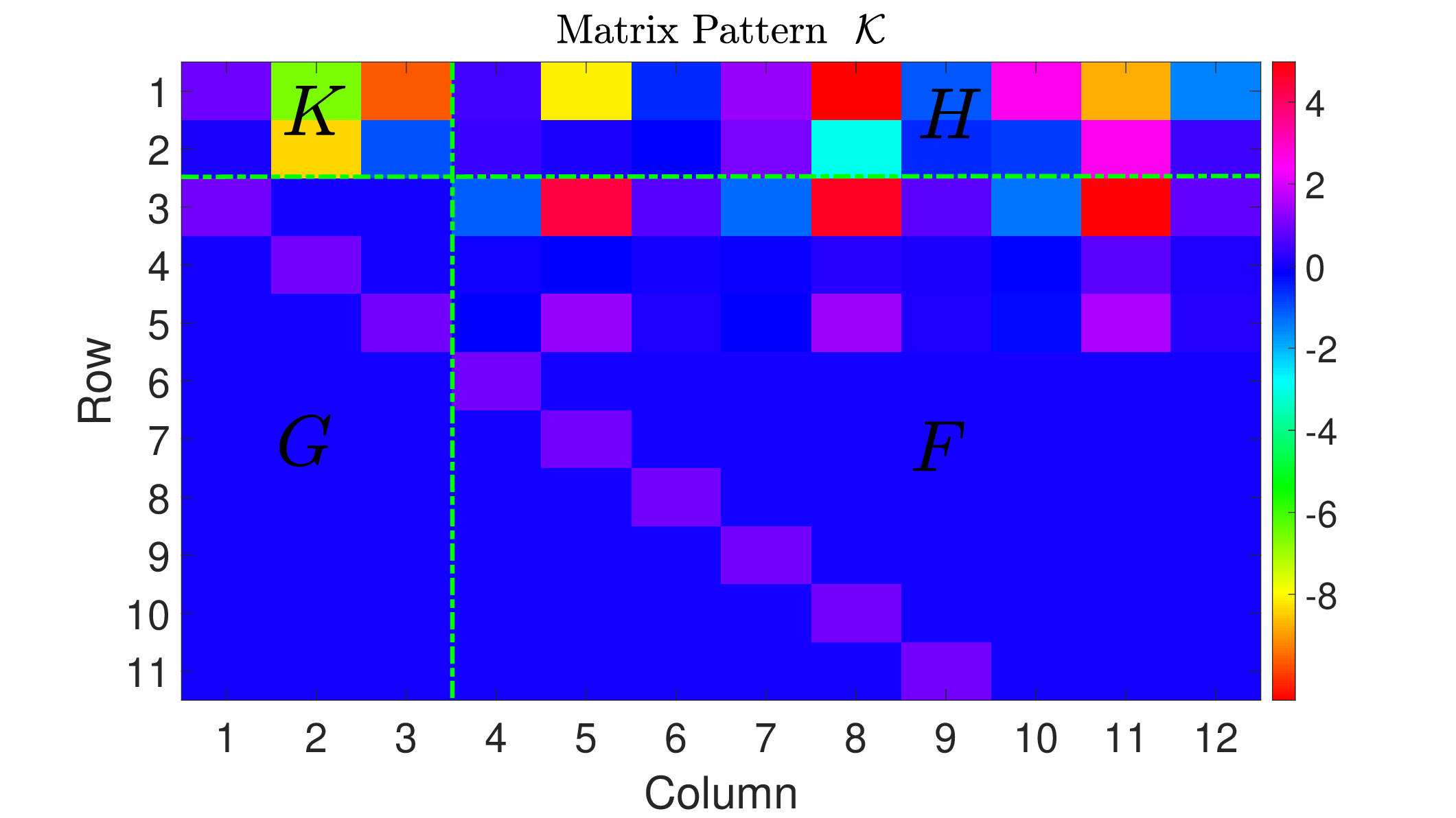}    
\caption{The pattern of dynamic linear Compensator $\mathcal{K}$.} 
\label{fig:pattern}
\end{center}
\end{figure}

\begin{figure}[!t]
\begin{center}
\includegraphics[width=0.6\linewidth]{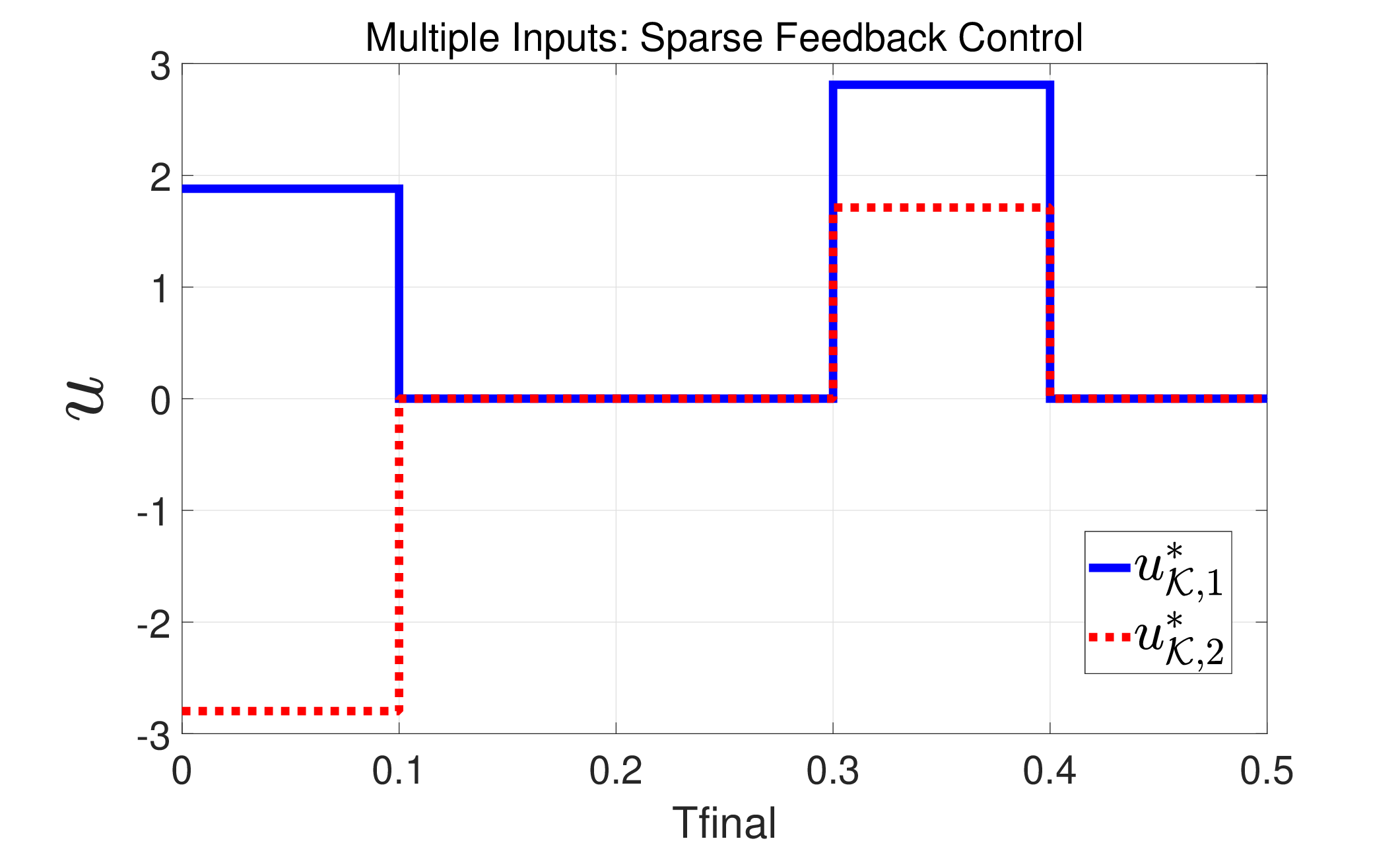} 
\includegraphics[width=0.6\linewidth]{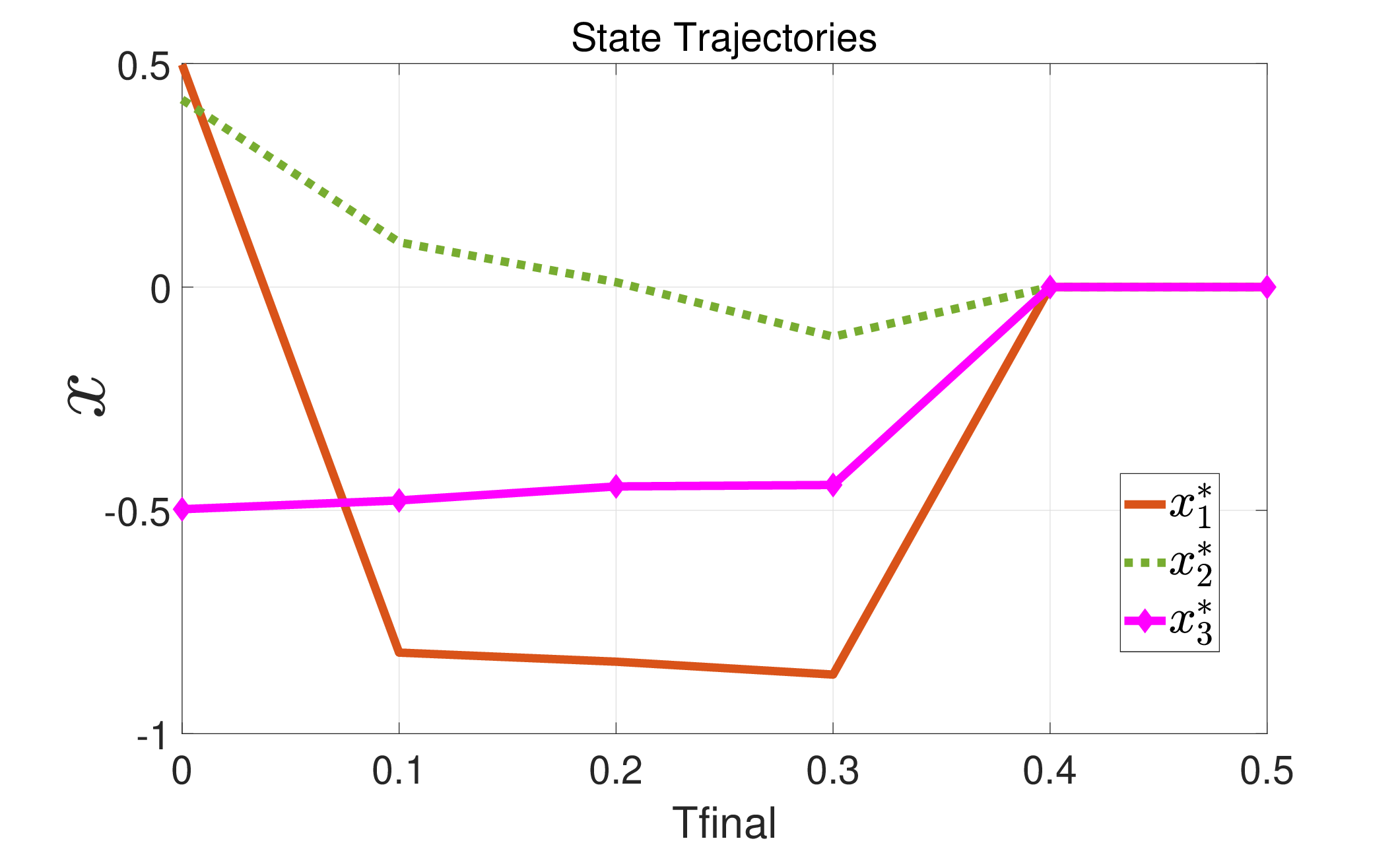}    
\caption{Multiple control inputs case: optimal sparse feedback control inputs $u_{\mathcal{K}}^{\ast}(t)$ (top) and the corresponding optimal state trajectories $x^{\ast}(t)$ (bottom).} 
\label{fig:Multi-inputs-state}
\end{center}
\end{figure}

\subsection{Multiple inputs numerical benchmark}
As a second numerical simulation, we show the result that the synthesized dynamic controller \eqref{eq:dynamic-state-feedback} is useful for sparse feedback control of multi-input control system. We here consider a discretized version of third-order system with two control inputs. Using a ZoH sampling time of $\Delta{t}=0.1~[s]$, the discrete system matrices are given by
\begin{align*}
    A
    =
    \begin{bmatrix}
     1.1133  &  0.0177 &  -0.1478 \\
    0.0177  &  1.4517  &  0.2514 \\
    0.0418  &   0.2758  &  0.9208
    \end{bmatrix},
    \quad
    B=
    \begin{bmatrix}
    0.0031  &  0.5218\\
    0.0121  &  0.1486\\
    0.0957  &  0.1202\\
    \end{bmatrix},
\end{align*}
and output matrices is chosen as $C=0$ and $D=I$, which means that only the restriction on the control inputs $|u_{i}(t)|\leq 10$ by taking $s_{i}=10$, $\forall{i=1,2}$.
We now randomly generate initial data $x_{0}\in[-1,1]^{3}$ and each input channel of time length is $N=4$, then the related final time in state-space is as $\text{Tfinal}=0.5~[s]$. 

After the state-input matrices ${X}$ and ${U}$ were calculated (see \eqref{eq:sec5-MIMO-data} in Appendix),
we obtained the optimal value $\|U^{\ast}\|_{1}=24.1544$. Due to the fact that the augmented matrix $\Psi$ is invertible \eqref{eq:Psi-inv}, then the controller $\mathcal{K}$ with real matrices $(K,H,G,F)$ is computationally efficient. Figure~\ref{fig:pattern} reveals the pattern of the compensator $\mathcal{K}$, in which the color-bar reports the level of real values of the correlation elements in matrix $\mathcal{K}$. Theorem \ref{them:sparse-feedback-realziation} implies that we require the knowledge of the matrices $H$ and $K$  (see \eqref{eq:sec5-MIMO-KH-data} in Appendix) to synthesize the sparse feedback control, as follows
\begin{align*}
    u_{\mathcal{K}}^{\ast}=
\begin{bmatrix}
    1.8798 & 0.0000 & -0.0000 & 2.8111 & 0.0000 & 0.0000\\
   -2.7970 & 0.0000 & -0.0000 &  1.7122 & -0.0000 &  0.0000
\end{bmatrix}.
\end{align*}

As shown in Figure~\ref{fig:Multi-inputs-state}, the optimal feedback control signals contain two components, where both control inputs are along the input constraints $|u_{\mathcal{K},i}(t)|\leq10$, $\forall{i}=1,2$. 
Clearly, the inferred feedback control sequences are sparse as desired.
From this figure, it appears that the optimal state trajectories converge to zeros with minimum control effort.

\subsection{Tracking problem}
Finally, we show a numerical example to illustrate the effectiveness of our extended dynamic tracking compensator \eqref{eq:dynamic-tracking-comp} for tracking problem, in Section~\ref{sec:sfb-tracking}. By taking a continuous second-order harmonic oscillator as
\begin{align*}
\begin{aligned}
       \dot{x}(t) &=
    \begin{bmatrix}
        0 & 1\\
        1 & 0
    \end{bmatrix}x(t)
    + \begin{bmatrix}
        -2\\
        1
    \end{bmatrix}u(t),
    \quad 
    x(0)=
     \begin{bmatrix}
        0.5\\
      -0.5
    \end{bmatrix},\\
    y(t) &= 
    \begin{bmatrix}
        0 & 1
    \end{bmatrix}x(t),
\end{aligned}
\end{align*}
and discretized it with a ZoH sampling with time period $\Delta{t}=0.2~[s]$. The time horizon is $N=5$.
For convenience, we model a step reference signal \eqref{eq:reference-dynamics} as $r_{+}=1$ for all $t\geq{0}$. In this case, the default value of steady-state of reference is $r_{+}=r_{0}=1$. 

\begin{figure}[!t]
\begin{center}
\includegraphics[width=0.6\linewidth]{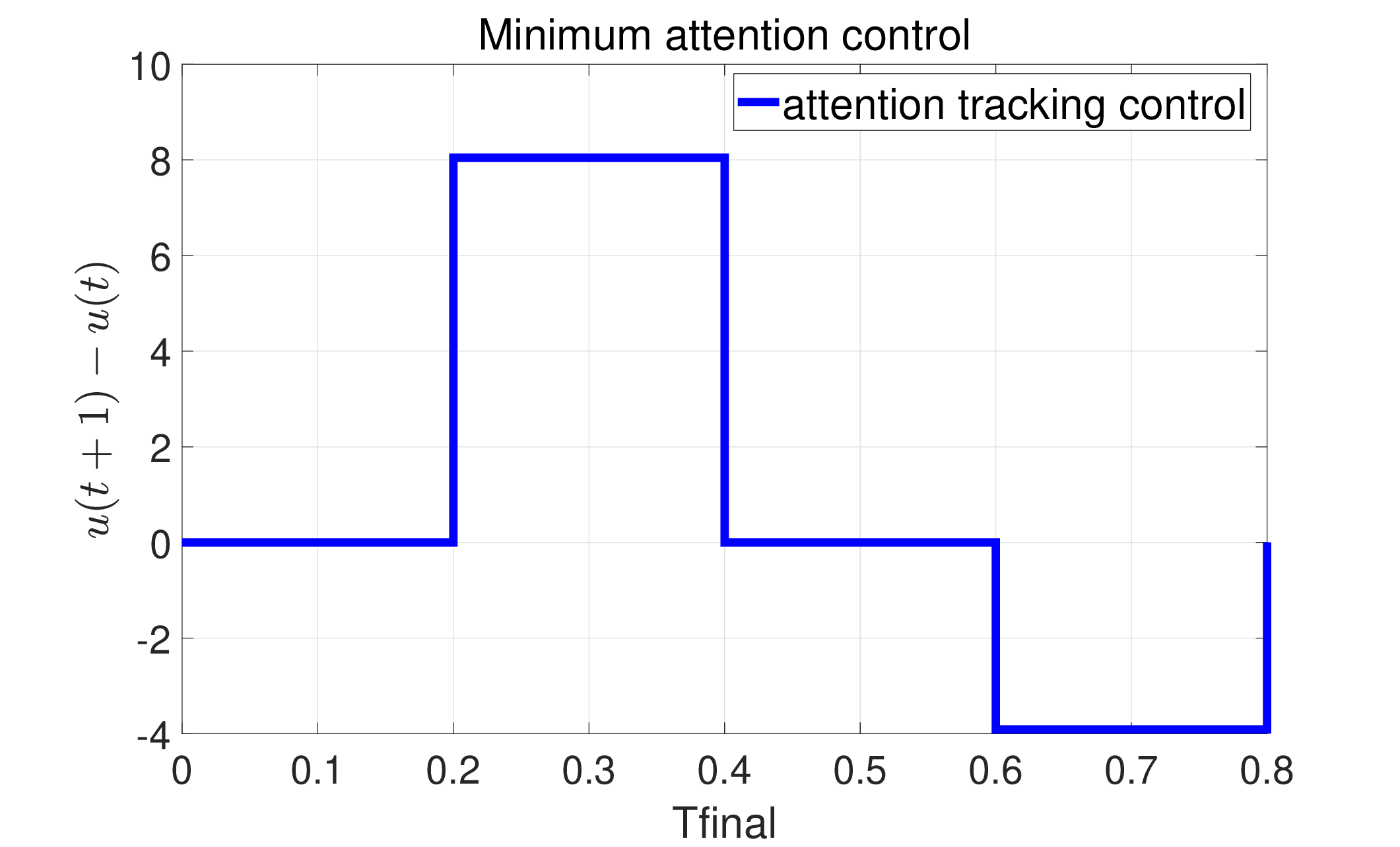}    
\includegraphics[width=0.6\linewidth]{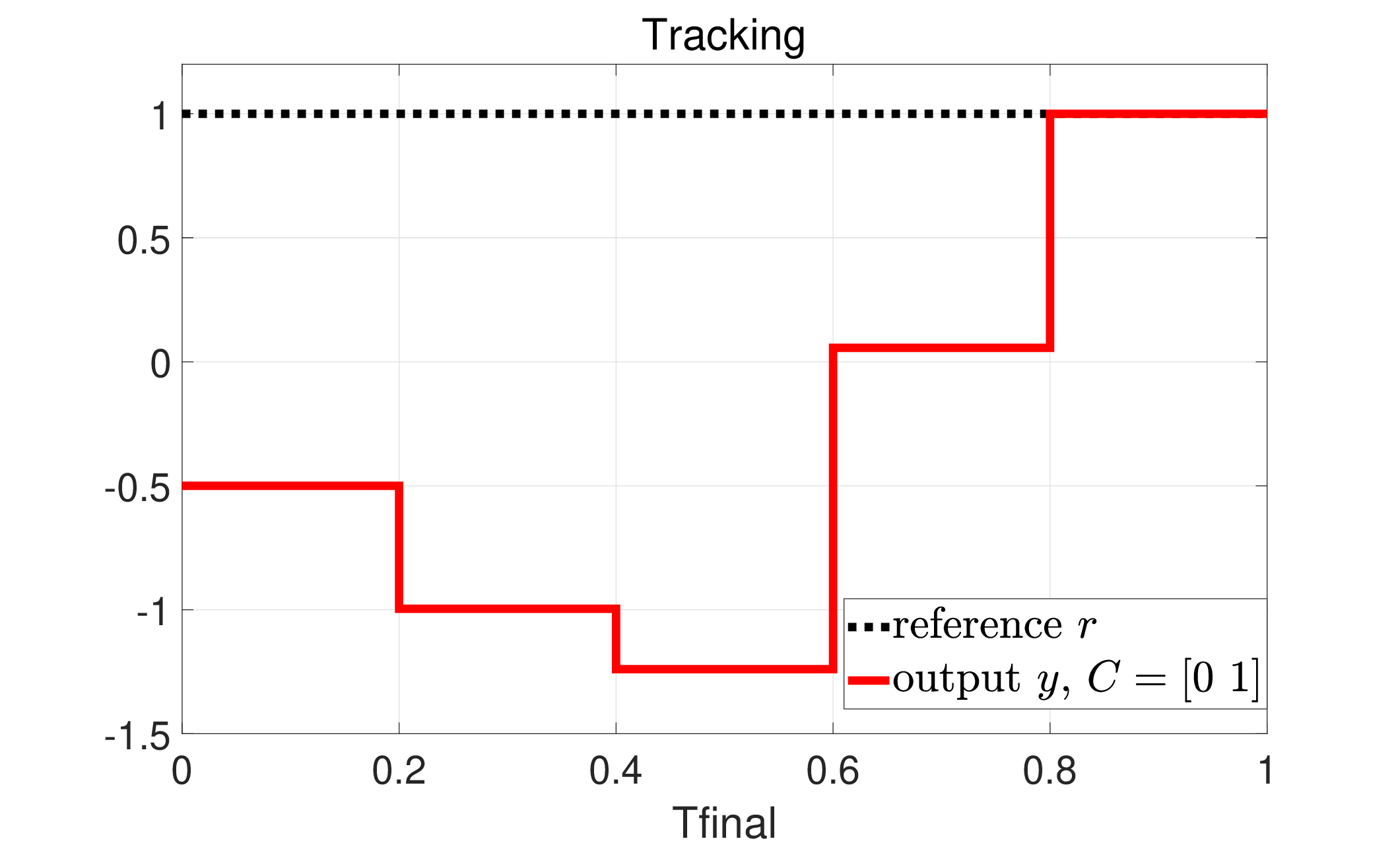}
\caption{Minimum attention tracking control (i.e., the difference of control input) (top) and the tracking trajectories of performance $y$ with respect to a step reference $r$ (bottom).} 
\label{fig:track-problem}
\end{center}
\end{figure}
Focusing on the solution $(X,U)$ determined by Problem \ref{prob:track-attention}, we then make use of the feedback realization technique to derive the feedback gain matrices $(K,H,G,F)$, where the relevant numerical results are shown in \eqref{eq:sec5-tracking-data-XUKHGF} in Appendix.

The problem at hand is to achieve tracking, we found that the determinants $\mathrm{det}(I-(A+BK))=\mathrm{det}(I-X_{1})=0.2475\neq{0}$ and $\mathrm{det}({I-F})=2.4902\neq{0}$, fitting two conditions in Lemma \ref{lemma:tracking-steady-state}, and the related feedforward gains can be selected as 
\begin{align*}
    M = -3.0837,
    \quad 
    L=
    \begin{bmatrix}
        0.5 & -1 & 0 & 0 & 0 & 0 
    \end{bmatrix}^{\top}.
\end{align*}
Based on the above analysis, the implemented dynamic tracking compensator 
\begin{align*}
    u_{\mathcal{K}_{r}}^{\ast}=
    \begin{bmatrix}
        -3.6367 &  -3.6367  &  4.4087  &  4.4087  &  0.5000  &  0.5000
    \end{bmatrix}
\end{align*}
can be easily applied to the discrete LTI plant so as to achieve  tracking.
Figure~\ref{fig:track-problem} depicts the evolution of minimum attention tracking control (top) and the corresponding tracking trajectories. It can be see that the performance output signal $y(t)$ gradually tracks a step reference signal $r(t)=1$ under the specified
time steps.

\section{Conclusion}\label{sec:conclusion}
In this paper, we have proposed a sparse feedback controller from open-loop solution to closed-loop realization. By means of implementing a dynamic linear compensator, the stability, optimality, and sparsity of the closed-loop $\ell_{1}$ optimal control are ensured. Besides, we extended the result to a tracking problem to achieve the minimum attention control. Finally, the simulations illustrated the effectiveness of the proposed sparse feedback control. 
Future work will focus on investigating sparse feedback control for model-free 
systems, exploiting input-state/output data to design a data-driven sparse controller. 
Additionally, an intriguing aspect to explore is sparse output feedback control. 
However, directly replicating the current results may be challenging 
due to the influence of generic initial conditions and the initialization of 
the dynamic output compensator, which are associated with the model matching problem.

\bibliographystyle{tfnlm}
\bibliography{arXiv-SFB07}

\begin{landscape}
\appendix
\section{Numerical results in Section~\ref{sec:simulations}}
\begin{flalign}
\begin{aligned}
    {X}&=
\left[ {\begin{array}{cccccccccccc}
1.0000 &  0 & 0 & 1.1753 &-4.2597 & -0.7724 & 1.2898 & -4.5326 & -0.8472 & 1.4109 & -4.8391 & -0.9265 \\
0 &  1.0000 & 0 & 0.0443 & 0.1526 & -0.0283 & 0.1183 & -0.2060 & -0.0772 & 0.2406 & -0.7377 & -0.1578 \\
0 &  0 & 1.0000 & 0.1322 &-1.4008 & -0.0892 & 0.1830 & -1.4257 & -0.1222 & 0.2551 & -1.5589 & -0.1692
\end{array} } \right],  
\\
{U}&= 
\left[ {\begin{array}{cccccccccccc}
 0.8012 & -7.2752 & -9.1168 & 0.0000  & -0.0000 & -0.0000 & -0.0000 & 0.0000 & 0.0000 & -0.0626 & 6.8408 & 0.0578\\
0.1140  & -8.1530 & -1.1419 & -0.0000 &  0.0000 &  0.0000 & -0.0000 & 0.0000 & 0.0000 & -2.9457 &  9.8661 &  1.9337
\end{array} } \right].
\end{aligned}
\label{eq:sec5-MIMO-data}
\tag{A.1}
\end{flalign}
\begin{flalign}
\begin{aligned}
K&=
\begin{bmatrix}
0.8012  & -7.2752  & -9.1168\\
0.1140  & -8.1530  & -1.1419
\end{bmatrix},\\
H &= 
{\begin{bmatrix}
 0.5857  & -8.2481 &  -0.4003  &  1.4963 & -10.8652  & -0.9970   & 2.8829  & -8.8614 &  -1.8905\\
0.3782  &  0.1298  & -0.2447   & 1.0268  & -2.7911  & -0.6724   &-0.8534   & 2.6232  &  0.5596
\end{bmatrix}.}
\end{aligned}
\label{eq:sec5-MIMO-KH-data}
\tag{A.2}
\end{flalign}
\begin{align}
\begin{aligned}
    X & =
    \begin{bmatrix}
    1.0000 &   0    &  -0.8758 &  -2.1177 & -2.5880 & -4.0772 & -1.2121 & -1.9096\\
     0     & 1.0000 &  1.0001  &   1.9972 &  1.6427 & 2.5880  & 0.6924  & 1.0908\\
    \end{bmatrix},\\
    U & =
    \begin{bmatrix}
    4.9553 & 6.0613 & 4.9553 & 6.0613 & -2.8673 & -4.5173 & -2.8673 & -4.5173
    \end{bmatrix},\\
    K &= \begin{bmatrix}  4.9553 &   6.0613\end{bmatrix},\\
    H & = 
    \begin{bmatrix}
     3.2331 && 4.4499 & 0.0000 & 0.0000 & -1.0576 & -1.6662
    \end{bmatrix}\\
    G & =
    \begin{bmatrix}
      1  &   0  &   0  &   0   &  0   &  0\\
     0   &  1   &  0   &  0   &  0   &  0
    \end{bmatrix}^{\top},\\
    F & =
    \begin{bmatrix}
    0.8758 &  2.1177  &  2.5880  &   4.0772 &  1.2121  &  1.9096 \\
   -1.0001 &  -1.9972 & -1.6427  &  -2.5880 &  -0.6924 &  -1.0908\\
    1.0000 &        0 &       0  &       0  & 0        &    0    \\
         0 &  1.0000  &      0   &      0   &      0   &    0    \\
         0 &        0 &   1.0000 &        0 &        0 &    0    \\
         0 &        0 &       0  &  1.0000  &      0   &     0   \\
    \end{bmatrix}.
    \end{aligned}
        \label{eq:sec5-tracking-data-XUKHGF}
        \tag{A.3}
\end{align}
\end{landscape}

\end{document}